\def\Tr{\operatorname{Tr}}
\newtheorem{theorem}{Theorem}
\newtheorem{corollary}{Corollary}
\newtheorem*{remark*}{Remark}
\begin{document}
\author{Ritopriyo Pal\textsuperscript{1,2}}
\email{palritopriyo01@gmail.com}

\author{Priya Ghosh\textsuperscript{1}}
\email{priyaghosh@hri.res.in}

\author{Ahana Ghoshal\textsuperscript{3}}
\email{ahanaghoshal1995@gmail.com}
\author{Ujjwal Sen\textsuperscript{1}}
\email{ujjwal@hri.res.in}

\affiliation{\textsuperscript{1}Harish-Chandra Research Institute,  A CI of Homi Bhabha National Institute, Chhatnag Road, Jhunsi, Prayagraj  211 019, India\\ \textsuperscript{2}Indian Institute of Science Education and Research Kolkata, Mohanpur, 741246, West Bengal, India\\\textsuperscript{3}Naturwissenschaftlich-Technische Fakult\"{a}t, Universit\"{a}t Siegen, Walter-Flex-Stra\ss e 3, 57068 Siegen, Germany}

\begin{abstract}
Quantum multiparameter estimation offers a framework for the simultaneous estimation of multiple parameters, pertaining to possibly noncommutating observables.
While the optimal probe for estimating a single unitary phase is well understood — being a pure state that is an equal superposition of the eigenvectors of the encoding Hamiltonian corresponding to its maximum and minimum eigenvalues — the structure of optimal probes in the multiparameter setting remains more intricate. 
We investigate the simultaneous estimation of two phases, each encoded through arbitrary 
qubit Hamiltonians, using arbitrary weight matrices, and considering single-qubit probes. 
We also consider 
single-qutrit probes, 
for which the encoding Hamiltonians are chosen as $\mathbb{SU}(2)$ generators.
We find that in both the qubit and qutrit scenarios, the optimal probe is a coherent superposition of the eigenstates corresponding to the largest and smallest eigenvalues of the total encoding Hamiltonian, with a fixed - and not arbitrary - relative phase. Remarkably, this optimal probe is independent of the specific choice of weight matrix, making it a universally optimal input state for the estimation of any pair of $\mathbb{SU}(2)$ parameters, which can be re-parameterized to the phase estimation problem. Furthermore, we show that this probe also maximizes the determinant of the quantum Fisher information matrix, providing a handy tool for identifying the optimal probe state.
We also examine the role of commutativity between the generators of the unitary encodings. Our results demonstrate that a high degree of commutativity degrades the achievable precision, rendering estimation infeasible in the extreme case. However, maximal noncommutativity does not necessarily yield optimal precision. Notably, in the case of commuting generators and single-qutrit probes, the optimal input state must exhibit complete coherence in the common eigenbasis, though maximal coherence is not required.
\end{abstract}



\title{Role of phase of optimal probe in  noncommutativity vs coherence\\in quantum multiparameter estimation}



\maketitle

\section{introduction}
The precise estimation of unknown parameters in physical systems is crucial to both scientific research and technological advancement.
However, due to inherent quantum and classical limitations, perfect parameter estimation using finite resources is fundamentally unattainable. In such situations, the quantum parameter estimation theory---also known as quantum metrology~\cite{RevModPhys.90.035005, Gio-2006,doi:10.1126/science.1104149,RevModPhys.89.035002,RevModPhys.90.035006,Pirandola2018-et}---offers a powerful framework to determine the best achievable precision in estimating one or more parameters of interest.
Quantum systems have proven to be advantageous in several parameter estimation problems with the aid of their trademark non-classical resources~\cite{RevModPhys.91.025001}, such as entanglement~\cite{RevModPhys.81.865}  and quantum coherence~\cite{RevModPhys.89.041003}. 
In particular, quantum correlations---such as entanglement and squeezing~\cite{PhysRevD.23.1693}--- have been shown to surpass the classical shot-noise limit, enabling a quadratic gain in precision and achieving the Heisenberg scaling in the error corresponding to the estimation of the parameters of interest~\cite{Gio-2006, doi:10.1126/science.1104149,Wineland92,Gross_2012,nagata,Israel2014}.
These quantum advantages have led to significant theoretical and experimental developments in various domains, including gravitational wave detection~\cite{Aasi2013, Schnabel2010,Danilishin2012,RevModPhys.86.121}, quantum clocks \cite{RevModPhys.83.331,RevModPhys.87.637,Katori2011, Nichol2022}, quantum imaging \cite{PhysRevLett.102.253601,Lugiato2002, Dowling:15,PhysRevLett.109.123601, Genovese2016, Moreau2019}, biosensing~\cite{TAYLOR20161,Mauranyapin2017}, etc.  

When a single parameter encoded into a quantum probe is to be estimated within the framework of quantum statistical model, 
the mean squared error in the estimation of the parameter of interest is 
lower bounded by the quantum Cram\'er-Rao bound (QCRB)~\cite{PhysRevLett.72.3439}. This bound is obtained by optimizing over all possible estimators and quantum measurements for a given probe state.
Hence, finding a measurement and an estimator which can saturate the QCRB is an integral part of any quantum parameter estimation problem. 
In the case of single-parameter estimation, it is guaranteed that an optimal measurement exists which can saturate the bound~\cite{paris2009,PhysRevLett.72.3439,Helstrom1969}. Additionally, maximum-likelihood estimators (MLE) 
are known to asymptotically attain the QCRB in the limit of a large number of identically prepared probes~\cite{Kay, Lehmann1998-pk}. 
Therefore, once the optimal estimator and measurement are fixed, the remaining task is to optimize over the input quantum state to achieve the best possible precision. 
In particular, for the problem of estimation of a single phase $\phi$, encoded in a quantum system via a unitary of the form $U \coloneqq \exp(- i \phi H)$, the optimal probe is well known to be an equal superposition of the eigenvectors of the encoding Hamiltonian $H$, corresponding to the highest and lowest eigenvalues~\cite{Gio-2006,doi:10.1126/science.1104149}.

Extending quantum parameter estimation theory from single to multiple parameters~\cite{damptemp,animeshmultiphase,animeshmultidi,Vidrighin2014-yp,Rafalresolution,Szczykulska_2017,minimal-tradeoff,ALBARELLI2020126311,Liang_2018,Liu2015,su2,waveformPRL,waveformPRL2,Tsang_2013} is non-trivial due to the incompatibility relation between optimal measurements corresponding to the different parameters of interest. Despite this challenge, a plethora of different lower bounds on the mean squared error in the simultaneous estimation of  multiple parameters for any given probe have been developed~\cite{HELSTROM1967101,Helstrom1969,GillMassar,Holevo1982,HOLEVO1973337,Nagaoka,HiroshiNagaoka,PhysRevX.11.011028}.
However, not all of the bounds are achievable in all estimation scenarios. Among these bounds, the most widely studied and applied are the
multiparameter quantum Cram\'er-Rao bound and the Holevo Cram\'er-Rao bound (HCRB)~\cite{Holevo1982,HOLEVO1973337}. 
It is important to note that these multiparameter estimation bounds incorporate a positive-definite weight matrix in their formulation to obtain a scalar figure of merit. This weight matrix accounts for the trade-offs in the simultaneous estimation of different parameters of interest. 
As a result, the choice of the optimal probe state in multiparameter quantum estimation often depends on the particular weight matrix used.

In this work, we investigate the simultaneous estimation of two phases, each encoded via arbitrary $\mathbb{SU}(2)$ unitary operations, under arbitrary positive-definite weight matrices, using two classes of quantum probe states: (i) single-qubit and (ii) single-qutrit. 
Our primary objective is to identify optimal probe states, for both the cases, that yield the best achievable precision by minimizing appropriate quantum estimation bounds over all possible input states. 
In quantum multiparameter estimation, precision bounds such as the QCRB and the HCRB are derived for a fixed input state and depend on the chosen measurement strategy. However, in the single-qubit case, it is well established that the QCRB cannot, in general, be saturated for multiple unitarily encoded parameters~\cite{paris2022,Candeloro_2024,Carollo_2019}. As such, we employ the HCRB as our figure of merit in the single-qubit scenario. For single-qutrit probes, this issue does not arise, and the QCRB is a valid criterion for evaluating precision, allowing us to use it directly to determine the optimal probe.

We find that, in both cases, the optimal probe is a pure state that minimizes the chosen bound (HCRB or QCRB) and simultaneously maximizes the determinant of the quantum Fisher information matrix (QFIM). Importantly, the optimal probe is unique and independent of the weight matrix, establishing its universal optimality for the estimation of any pair of $\mathbb{SU}(2)$ encoded parameters— that can be re-parameterized to the estimation of the two phases.  Moreover, the optimal probe structure resembles that of the well-known single-parameter case: a coherent superposition of the eigenstates of the total encoding Hamiltonian corresponding to its maximum and minimum eigenvalues. However, unlike in single-parameter estimation, this superposition includes a fixed relative phase dependent on the specific values of the two parameters being estimated.
We also demonstrate that the degree of commutativity between the two encoding generators plays a key role in determining the precision of our estimation protocol. The estimation becomes increasingly unfeasible as the degree of commutativity between the two generators increases. However, maximal noncommutativity does not necessarily yield the highest estimation precision. Furthermore, when we consider a single-qutrit probe and two commuting encodings to be simultaneously estimated, we find that complete coherence in the eigenbasis of the total encoding Hamiltonians is a necessary condition, whereas the state need not be maximally coherent.

The rest of the paper is organized as follows. Section~\ref{not} introduces the notation used throughout the paper. In Sec.~\ref{multi}, we briefly discuss the fundamentals of quantum multiparameter estimation theory. Thereafter, in Sec.~\ref{qubit-strength} we optimize over the HCRB to find the optimal single-qubit probe in the two arbitrary unitary-encoded phase estimation problem for arbitrary weight matrix. This is followed by finding the optimal single-qutrit probe in the two $\mathbb{SU}(2)$ encoded phase estimation problem for arbitrary weight matrices using multiparameter QCRB in Sec.~\ref{qutrit-strength}. 
In Sec.~\ref{commuting-qutrit}, we will discuss the optimal probe for a two-phase estimation problem, where the parameters are encoded through $\mathbb{SU}(2)$ operations with commuting generators, using single-qutrit probes.
We conclude our discussion in Sec.~\ref{conclude}.

\section{Notations} 
\label{not}
Here, we introduce the notations used throughout the paper. Let \(\mathcal{B}(\mathcal{H})\) denote the set of positive semidefinite operators with unit trace on a Hilbert space \(\mathcal{H}\). A \(d\)-dimensional Hilbert space is denoted by \(\mathcal{H}^d\), and \(\mathbb{I}_d\) represents the identity operator on it. \(A^{-1}\) and \(A^T\) denote the inverse and transpose of a matrix \(A\), respectively. The notation \(A \succeq B\) indicates that \((A - B)\) is positive semidefinite, where \(A\) and \(B\) are matrices. Re$(\cdot)$ and Im$(\cdot)$ denote the real and imaginary parts of ``$\cdot$", respectively. $\Tr(\cdot)$ denotes trace of ``$\cdot$". The symbols ``$\cdot$" and ``$\times$" denote the dot and cross products between two vectors, respectively. $|\boldsymbol{v}|$ denotes the Euclidean norm of the vector $\boldsymbol{v}$. 
The \(x\)-, \(y\)-, and \(z\)-axes refer to the directions defined by the eigenbases of the Pauli-\(X\), Pauli-\(Y\), and Pauli-\(Z\) operators, respectively. The positive (negative) direction along each axis corresponds to the eigenstate associated with the higher (lower) eigenvalue of the respective Pauli operator.

\section{Quantum parameter estimation theory}
\label{multi}
Quantum parameter estimation theory seeks to determine unknown parameters encoded in quantum states with the highest possible precision, playing a central role in quantum metrology. Traditionally, the theory adopts a frequentist approach, where the parameter is treated as a fixed but unknown quantity, and the precision of an estimator is constrained by bounds such as the quantum Cram\'er-Rao bound, which depends on the quantum Fisher information of the probe state. However, an alternative perspective is offered by the Bayesian approach, where the parameter is modeled as a random variable with a known prior distribution~\cite{bayesfreq_phase_estimation}. 
In this work, we are more focused on the frequentist approach to quantum multiparameter estimation~\cite{HELSTROM1967101,Helstrom1969}, specifically examining the structure and implications of the multiparameter QCRB, the Holevo Cram\'er-Rao bound~\cite{HOLEVO1973337}, and the conditions under which these bounds are attainable.

Let $\bm{\varphi} \coloneqq (\varphi_1, \varphi_2, \dots, \varphi_N)^T \in \Theta \subset \mathbb{R}^N$ be a vector of $N$ unknown but approximately known fixed real parameters. These parameters are encoded into a finite-dimensional (say, $d$-dimensional) quantum state $\rho$ through a quantum process $\Lambda_{\bm{\varphi}}$, yielding the encoded state $\rho_{\bm{\varphi}}\coloneqq \Lambda_{\bm{\varphi}}\left[\rho\right]$. Here, the state $\rho$ is referred to as the input probe.
The goal of quantum multiparameter estimation is to simultaneously estimate all the $N$ parameters of such an encoded state, with the best possible precision, by performing appropriate measurements on the encoded state. A quantum measurement is described by a positive operator-valued measure (POVM), denoted by $\bm{\Pi} (\coloneqq \lbrace \Pi_{\omega}\succeq 0, \omega\in \Omega | \sum_{\omega\in\Omega}\Pi_{\omega}=\mathbb{I}_d\rbrace)$, where the set of possible outcomes of the measurement, denoted as $\Omega$, can be considered finite, without loss of generality \cite{PhysRevLett.98.190403}. The probability of obtaining the outcome $\omega$ upon measurement on the encoded state $\rho_{\bm{\varphi}}$ is given by $p(w;\bm{\varphi})(\coloneqq\Tr{(\Pi_{\omega}\rho_{\bm{\varphi}})})$, according to Born's rule. The values of all encoded parameters are inferred from the statistics of the measurement outcomes. To perform this inference, N functions—called estimators—are defined as $\tilde{\bm{\varphi}}\coloneqq\left(\tilde{\varphi_1},\tilde{\varphi}_2\cdots\tilde{\varphi}_N\right):\Omega\to\Theta$, where $\tilde{\varphi}_i$ denotes the estimator corresponds to $i$-th parameter of interest. 
The precision that is achieved in the simultaneous estimation of $N$ parameters is quantified via the mean squared error matrix (MSEM) of the $N$ estimators as follows:
\begin{align}   \bm{\Sigma}_{\rho_{\bm{\varphi}}}\left(\Pi,\bm{\tilde{\varphi}}\right)\coloneqq\sum_{\omega\in\Omega} p(\omega;\bm{\varphi})\left(\tilde{\bm{\varphi}}(\omega)-\bm{\varphi}\right)\left(\tilde{\bm{\varphi}}(\omega)-\bm{\varphi}\right)^T.
\end{align}
In the frequentist framework, estimators are considered to be~\textit{locally-unbiased}, satisfying the following two conditions:
\begin{align}
\label{local-unbiased}    \sum_{\omega\in\Omega}p\left(\omega;\bm{\varphi}^*\right)\bm{\tilde{\varphi}}(\omega)=\bm{\varphi}^*\,,\,\sum_{\omega\in\Omega} \tilde{\varphi}_i(\omega)\frac{\partial p(\omega;\bm{\varphi})}{\partial \varphi_j}\bigg|_{\bm{\varphi}=\bm{\varphi}^*}=\delta_{ij},
\end{align}
where $\bm{\varphi}^*$ denotes the vector of true values of the parameters of interest. The two conditions written in Eq.~\eqref{local-unbiased} essentially imply that the estimators are unbiased at the true value of the parameters of interest, $\bm{\varphi}^*$, as well as in a small neighborhood around $\bm{\varphi}^*$.   For locally-unbiased estimators, the MSEM is equal to the covariance matrix of the estimators for any choice of measurement $\bm{\Pi}$.
A multiparameter metrological bound helps to find the set of optimal estimators and measurements $\left(\{\bm{\tilde{\varphi}}^{\mathrm{opt}}\},\{\bm{\Pi}^{\mathrm{opt}}\}\right)$ that provides a lower achievable bound on the covariance matrix for a given encoded quantum state.

\subsection{Multiparameter quantum Cram\'er-Rao bound}
The multiparameter quantum Cram\'er-Rao bound (QCRB) provides a lower bound of the covariance matrix, and is given by
\begin{equation}
\label{multiqcrb} \bm{\Sigma}_{\rho_{\bm{\varphi}}}\left(\Pi,\bm{\tilde{\varphi}}\right)\succeq \mathcal{F}^{-1}_Q(\rho_{\bm{\varphi}}),
\end{equation}
 where $\mathcal{F}_Q$ is the quantum Fisher information matrix (QFIM)~\cite{Liu_2020}, whose elements are defined as
\begin{equation}  \left(\mathcal{F}_Q(\rho_{\bm{\varphi}})\right)_{ij}\coloneq\,\Re\left(\Tr{\left(\rho_{\bm{\varphi}}L_{\varphi_i}L_{\varphi_j}\right)}\right),
\end{equation}
where $i,j\in \{1,2,\ldots ,N\}$.
Here, $L_{\varphi_i}$ denotes the symmetric logarithmic derivative (SLD) operator~\cite{1054108} corresponding to the $i$-th parameter of interest $\varphi_i$, satisfying the condition 
\begin{equation*}
    \frac{\partial \rho_{\bm{\varphi}}}{\partial \varphi_i}=\frac{1}{2} \left(L_{\varphi_i}\rho_{\bm{\varphi}}+\rho_{\bm{\varphi}}L_{\varphi_i}\right).
\end{equation*}
Note that the QCRB depends solely on the encoded quantum state and is independent of both the choice of measurement and the estimator.
Moreover, if one wishes to estimate $\bm{\varphi}'=\bm{\varphi}'(\bm{\varphi})$ instead of $\bm{\varphi}$, then the quantum Fisher information matrices for the two different parametrizations are related to each other via a transformation using a Jacobian, as
\begin{align*}  \mathcal{F}_Q(\bm{\varphi})=\mathcal{J}^T\mathcal{F}_Q(\bm{\varphi}')\mathcal{J},
\end{align*}
where the elements of the Jacobian $\mathcal{J}$ are given by $\displaystyle{\mathcal{J}_{ij}\coloneqq\partial \varphi'_i/\partial \varphi_j}$ \cite{Liu_2020}.

Each diagonal element of the QFIM, $\left(\mathcal{F}_Q\right)_{ii}$, corresponds to the quantum Fisher information~\cite{Braunstein1995GeneralizedUR} of the parameter $\varphi_i$ and the inverse of $\left(\mathcal{F}_Q\right)_{ii}$ serves as the lower bound for the variance of the estimator $\tilde{\varphi}_i$. Hence, when $N=1$, the estimation problem reduces to the estimation of a single parameter. Note that the quantum Cram\'er-Rao bound~\cite{Gio-2006,Nagaoka} in this case is simply a scalar inequality rather than a matrix inequality written in Eq.~\eqref{multiqcrb}. 
In the quantum multiparameter estimation problem, a scalar figure of merit that is more convenient to handle than a matrix inequality is introduced. It is known as a cost or weight matrix $\mathcal{W}$, which determines the relative importance of accurate estimation of the different parameters and helps quantify their trade-offs. By definition, $\mathcal{W}$ is a real, symmetric, and positive-definite matrix. Using the weight matrix, a scalar bound for the multiparameter estimation problem can be formulated as follows:
\begin{align}
\label{cost}
\Tr{\left(\mathcal{W}\,\bm{\Sigma_{\rho_{\varphi}}\left(\Pi,\tilde{\varphi}\right)}\right)} \geq \mathcal{C}^{\textbf{SLD}},
\end{align}
where $\mathcal{C}^{\textbf{SLD}}\coloneqq \Tr{\left(\mathcal{W}\,\mathcal{F}_Q^{-1}(\rho_{\bm{\varphi}})\right)}$ is referred to as the SLD-based multiparameter QCRB.

For quantum single-parameter estimation problems, there always exists at least one measurement which saturates the QCRB. However, for $N\geq2$, the scalar bound corresponding to multiparameter estimation, $\mathcal{C}^{\textbf{SLD}}$, is generally unattainable because
it is not possible in general to find a single measurement that saturates the multiparameter QCRB due to the inherent noncommutativity of the SLDs associated with the parameters of interest. Therefore, several alternative achievable bounds are developed to find an achievable lower bound in multiparameter quantum estimation problem in the literature; one of such bounds is the Holevo Cram\'er-Rao bound or Holevo-Nagoka Cra\'mer-Rao bound~\cite{Holevo1982}.

\subsection{Holevo Cram\'er-Rao Bound}
Consider a vector of Hermitian operators $\bm{X} \coloneqq \left(X_1, X_2, \cdots X_N\right)^T$ where each $X_i$ is defined as
\begin{equation*}
X_i\coloneqq\sum_{\omega\in\Omega}\,\Pi_{\omega}\left(\tilde{\varphi}_i(\omega)-\varphi_i \right).
\end{equation*}
The condition of local-unbiasedness of estimators presented in Eq.~\eqref{local-unbiased} can be expressed easily in terms of $\bm{X}$ as
$\Tr{\left(\bm{\nabla}\rho_{\bm{\varphi}}\bm{X}^T\right)}=\mathbb{I}_N$ where $\bm{\nabla}\rho_{\bm{\varphi}}$ denotes the gradient of $\rho_{\bm{\varphi}}$ with respect to $\bm{\varphi}$. 
The Holevo Cram\'er-Rao bound (HCRB) is given by
\begin{align}
\label{holevo_defn}   &\Tr{\left(\mathcal{W}\,\bm{\Sigma}_{\rho_{\bm{\varphi}}}\left(\bm{\Pi},\bm{\tilde{\varphi}}\right)\right)} \geq \mathcal{\mathbf{C}}^{\textbf{Holevo}}(\rho_{\bm{\varphi}},\mathcal{W}),\nonumber
\\ &\text{where } \hspace{1 mm} \mathcal{\mathbf{C}}^{\textbf{Holevo}}(\rho_{\bm{\varphi}},\mathcal{W})\coloneqq\min_{\bm{X}}h_{\bm{\varphi}}(\bm{X}),\nonumber \\
&h_{\bm{\varphi}}(\bm{X})\coloneqq\big[\Tr{\left(\mathcal{W}\,\text{Re} Z[\bm{X}]\right)}+ \big|\big|\sqrt{\mathcal{W}}\,\text{Im} Z[\bm{X}]\sqrt{\mathcal{W}}\,\big|\big|_1\,;\nonumber\\ &\Tr{\left(\bm{\nabla}\rho_{\bm{\varphi}}\bm{X}^T\right)}=\mathbb{I}_N\big].
\end{align}
Here, $Z[\bm{X}] \coloneqq \Tr{\left(\rho_{\bm{\varphi}}\bm{X}\bm{X}^T\right)}$ is a positive-semidefinite matrix.
$||A||_1 (\coloneqq \Tr(\sqrt{A^\dagger A}))$ denotes the trace norm of operator $A$. In general, the minimization over $\bm{X}$ in Eq.~\eqref{holevo_defn} cannot be performed analytically. For finite-dimensional probe states, the HCRB can be evaluated using semidefinite programming~\cite{PhysRevLett.123.200503}. 

The Holevo Cram\'er-Rao bound is an achievable lower bound in quantum multiparameter estimation~\cite{10.1063/1.2988130,yang2019precision}, However, saturating the HCRB often requires collective measurements on many copies of the encoded state $\rho_{\bm{\varphi}}$, which may be impractical.
Nonetheless, in some cases – for example, when the encoded state is pure~\cite{Matsumoto_2002} or when estimating displacement parameters with Gaussian states
~\cite{Holevo1982}, the HCRB can be achieved via separable measurements.

\subsection{Achievability criterion of multiparameter QCRB}
The HCRB is bounded in relation to the multiparameter quantum Cram\'er-Rao bound as follows:
\begin{equation*}
    \mathcal{\mathbf{C}}^{\textbf{SLD}}\leq \mathcal{\mathbf{C}}^{\textbf{Holevo}}\leq 2\mathcal{\mathbf{C}}^{\textbf{SLD}}.
\end{equation*} 
Thus, HCRB is not only achievable but also tighter as compared to the multiparameter QCRB.
For positive-definite weight matrices—i.e., when all eigenvalues of $\mathcal{W}$ are strictly positive—there exists a necessary and sufficient condition under which HCRB coincides with multiparameter QCRB, making the latter achievable. This condition, known as the~\textit{weak-commutativity criterion} or the~\textit{criterion for asymptotic compatibility}~\cite{weak1,weak2}, states that the HCRB and multiparameter QCRB coincide if and only if the expectation values of the commutators of the SLDs corresponding to all pairs of parameters of interest vanish with respect to the encoded state $\rho_{\bm{\varphi}}$. That is, the multiparameter QCRB is achievable for positive-definite weight matrices when
\begin{equation}
\label{weak}
\text{Im}\left[\Tr\left(\rho_{\bm{\varphi}}L_{\varphi_i}L_{\varphi_j}\right)\right]=0;\quad\forall\,\varphi_i\neq \varphi_j.
\end{equation}

The weak-commutation condition can be expressed in matrix form as $D = 0$~\cite{Carollo_2019,paris2022}, where the matrix elements are given by \( D_{ij} \coloneqq \text{Im}\left[\Tr\left(\rho_{\bm{\varphi}} L_{\varphi_i} L_{\varphi_j}\right)\right] \). The matrix $D$ is known as the Uhlmann matrix.

We are now ready to present the results of our paper.

\section{Finding the optimal probe in estimation of two unitary encoding phases}
Quantum metrological bounds for parameter(s) estimation are typically derived by optimizing over measurements and estimators for a given input probe. To attain the highest possible precision, it is further necessary to optimize the bounds over the choice of input probe states. In this work, we aim to determine the maximum achievable precision in the simultaneous estimation of two phases encoded via an $\mathbb{SU}(2)$ process. Accordingly, we focus on finding the optimal probe in the simultaneous estimation of two phases encoded via an $\mathbb{SU}(2)$ process, considering single-copy probe acting on a two-dimensional Hilbert space in Subsection~\ref{qubit-strength}, followed by extending the analysis to the three-dimensional case in Subsection~\ref{qutrit-strength}. In the last Subsection, we will discuss properties of the optimal probe in two $\mathbb{SU}(2)$ encoded phase estimation problem for commuting generators using single-qutrit probes.

Throughout this work, we restrict our attention to uncorrelated probes.
It has been shown that for $N$ copies of uncorrelated probe, all metrological bounds considered in this paper—namely, the QCRB and HCRB—take the form \(\frac{\mathscr{C}}{N}\), where \(\mathscr{C}\) denotes the corresponding bound for a single-copy probe \cite{Matsumoto_2002}.
Therefore, we restrict our analysis on finding the optimal probe for the simultaneous estimation of two phases encoded through an $\mathbb{SU}(2)$ process, considering only the single-copy probe for the remainder of the paper.

\subsection{\texorpdfstring{$\mathbb{SU}(2)$}{SU(2)} encoding with single-qubit probes}
\label{qubit-strength}
In this subsection, we determine the optimal input probe for the simultaneous estimation of two phases encoded through an unitary process, when a single-copy probe acting on a two-dimensional Hilbert space are available. Although the QCRB is generally achievable when the weak commutation condition is satisfied, it cannot be used for the simultaneous estimation of two parameters encoded through a unitary process in a single qubit state~\cite{paris2022,Carollo_2019,Candeloro_2024}. Therefore, we employ HCRB to find the optimal probe in our scenario.
We begin by finding the optimal single-qubit pure probe for arbitrary unitary encoding processes and arbitrary weight matrices. Subsequently, we show that pure single-qubit probes exhibit better accuracy in the estimation of two unitary encoding phases than their mixed probe counterparts in our case.

Let us consider an arbitrary single-qubit pure state $\ket{\psi}$ as the input probe. In the Bloch sphere representation, the corresponding density matrix can be expressed as
\begin{equation*}
\rho=\ketbra{\psi}{\psi} = \frac{1}{2} (\mathbb{I}_2 + \bm{r} \cdot \bm{\sigma} ),
\end{equation*}
where $\bm{r} = (r_1, r_2, r_3)^T \in \mathbb{R}^3$ is the Bloch vector satisfying $r_1^2 + r_2^2 + r_3^2 = 1$. 
Let the two parameters to be estimated be denoted by $ \bm{\varphi} \coloneqq \{\varphi_1, \varphi_2\} \in \mathbb{R}$, which are encoded into the probe state $\ket{\psi}$ through an arbitrary unitary process $U(\bm{\varphi})$.
Let us consider that the unitary $U(\bm{\varphi})$ is generated by two Hermitian operators, $H_1$ and $H_2$, which are known $2 \times 2$ matrices and independent of $\bm{\varphi}$. The resulting encoded state, denoted by $\ket{\psi(\bm{\varphi})}$, is then given by
\begin{equation}
\label{unitary}
\ket{\psi(\bm{\varphi})} \coloneqq U(\bm{\varphi}) 
   \ket{\psi}=\exp{- i \left(\varphi_1 H_1+\varphi_2 H_2\right)}|\psi\rangle.
\end{equation}
The precision of the simultaneous estimation of $\varphi_1,\varphi_2$ is expected to depend on the  commutativity between $H_1$ and $H_2$.
Any general  $2 \times 2$ Hermitian matrix $A$ can be expressed as a linear combination of $\mathbb{I}_2$ along with the generators of $\mathbb{SU}(2)$ as
$A = \sum_{i=0}^{3} t_i \sigma_i$,
where  $\vec{\boldsymbol{t}} \in \mathbb{R}^4$ with $\vec{\boldsymbol{t}} \coloneqq \{t_0,\cdots, t_3\}$ and  $\sigma_i \in  \{ \mathbb{I}_2, \sigma_x, \sigma_y, \sigma_z \}$. Therefore, the two encoding Hamiltonians, $H_1$  and  $H_2$ can be written as follows:
\begin{equation*}
H_1 = \frac{1}{2}\bm{a}_1 \cdot \bm{\sigma}, \quad \text{and}\quad H_2 = \frac{1}{2}\bm{a}_2 \cdot \bm{\sigma}.
\end{equation*}
Here,  $\bm{a}_1$  and  $\bm{a}_2$  are unit vectors in  $\mathbb{R}^3$ , and $ \bm{\sigma} \coloneqq \left( \sigma_x, \sigma_y, \sigma_z \right)^T$ denotes the vector of Pauli matrices. We do not include the identity operator in $H1$ and $H_2$, since it only causes a relative shift in the energy levels between the two Hamiltonians and does not act as a non-trivial contributing factor while evaluating the precision in estimation.

Since the encoded state in our case is pure, SLDs that yield optimal precision in the quantum multiparameter estimation problem for each parameter $\varphi_i$ can be expressed as follows for the encoded state $\ket{\psi(\bm{\varphi})}$~\cite{1054108,Holevo1982}:
\begin{equation}
L_{\varphi_i} = 2\left(\ket{\partial_{\varphi_i} \psi(\bm{\varphi})}\bra{ \psi(\bm{\varphi})} + \ket{\psi(\bm{\varphi})} \bra{\partial_{\varphi_i}\psi(\bm{\varphi})}
\right),
\end{equation}
where $i\in\lbrace{1,2\rbrace}$, and $\ket{\partial_{\varphi_i} \psi(\bm{\varphi})} $ represents the first-order partial derivative of $\ket{\psi(\bm{\varphi})}$ with respect to $\varphi_i$.
For arbitrary finite-dimensional pure probes and arbitrary unitary encoding processes of two parameters, the elements of QFIM can be expressed in terms of Hamiltonians and probe states as~\cite{Liu2015}, 
\begin{align}
\label{cov}
&(\mathcal{F}_Q)_{ij} = 4\text{Cov}\left(\mathcal{H}_i,\mathcal{H}_j\right),\nonumber&&\\&\coloneqq 4\left[\text{Re}\left(\expval{\mathcal{H}_i\mathcal{H}_j}{\psi}\right)-\expval{\mathcal{H}_i}{\psi}\expval{\mathcal{H}_j}{\psi}\right],
\end{align}
where $\mathcal{H}_j \coloneqq i     (\partial_{\varphi_j} U^{\dagger}(\bm{\varphi}))U(\bm{\varphi}), \hspace{ 2 mm} j \in \lbrace 1, 2 \rbrace$. 
We shall call $\mathcal{H}_1 (\mathcal{H}_2)$ as the effective Hamiltonians corresponding to $H_1 (H_2)$ respectively. 
When $\lbrack H_1, H_2 \rbrack  = 0$, in the elements of QFIM written in Eq.~\eqref{cov},  $\mathcal{H}_i$ will be replaced by $(-H_i)$, for all $i \in \{1,2\}$.
In the general case, we shall use Wilcox's formula~\cite{10.1063/1.1705306} to find the explicit form of $\mathcal{H}_1$ and $\mathcal{H}_2$. Wilcox's formula provides a formula for differentiating the exponential of any operator, say $\hat{A}$, and is given by
\begin{equation}
\label{wilcox}
    \partial_l e^{\hat{A}}=\int_{0}^1 ds\, e^{s\hat{A}}\left(\partial_l \hat{A}\right)e^{(1-s)\hat{A}}.
\end{equation}
With the help of Wilcox's formula written in Eq.~\eqref{wilcox}, effective Hamiltonians can be expressed in the basis of the Pauli matrices, i.e., $\mathcal{H}_i=\frac{1}{2}\bm{\eta}_i\cdot\bm{\sigma}$,
with $i \in \{1,2\}$~\cite{su2}. The explicit forms of $\bm{\eta}_1$ and $\bm{\eta}_2$ are written in Appendix~\ref{eta_expr}. 

Thus, substituting the simplified forms of $\mathcal{H}_i$'s in Eq.~\eqref{cov}, the elements of the QFIM are given by
\begin{equation}  \left(\mathcal{F}_Q\right)_{\text{ij}}=\bm{\eta}_i\cdot\bm{\eta}_j-\left(\bm{r}\cdot\bm{\eta}_i\right)\left(\bm{r}\cdot\bm{\eta}_j\right),
\end{equation}
where $i,j\in\lbrace{1,2\rbrace}$. In this case, the elements of the Uhlmann matrix, $D$, are given by $D_{12}=-D_{21}=\bm{r}\cdot\left(\bm{\eta}_1\times\bm{\eta}_2\right)$.

The HCRB is a tighter bound than the QCRB, in  quantum  multiparameter estimation. For the QCRB to be achievable, the encoded state must satisfy the weak-commutation condition written in Eq.~\eqref{weak}. Specifically, in our scenario, the weak commutation condition takes the form $\bra{\psi (\bm{\varphi})} \left[L_{\varphi_1}, L_{\varphi_2}\right] \ket{\psi \left(\bm{\varphi}\right)} = 0$. This implies that the Bloch vector $\bm{r}$ of the input probe must lie in the plane spanned by the vectors $\bm{\eta}_1$ and $\bm{\eta}_2$, i.e., $(\bm{r}\cdot\left(\bm{\eta}_1\times\bm{\eta}_2\right)) = 0$, in order for the weak-commutation condition to be satisfied.
It is straightforward to see that when the Hamiltonians governing the unitary encoding process commute and the probe state is pure, the weak-commutation condition is automatically satisfied for any probe state \cite{Demkowicz-Dobrzański_2020, weak2}. As a result, QCRB becomes equal with HCRB for all pure probe states and no additional constraints are imposed on the choice of the probe while finding the optimal probe using the QCRB in such cases since, in such cases, QCRB is an automatically attainable bound.
However, imposing the above condition makes the QFIM singular for the noncommuting case when single-qubit probes are used. Therefore,  
the multiparameter QCRB cannot be a valid lower bound for the estimation of the two $\mathbb{SU}(2)$ phases using pure single-qubit probes which satisfy Eq.~\eqref{weak}, since the inverse of the QFIM does not exist. More generally, it was recently shown in Ref.~\cite{paris2022,Candeloro_2024,Carollo_2019} that whenever two parameters are encoded on an arbitrary single-qubit probe (pure or mixed) through a unitary and impose the weak-commutation condition, it ends up with a singular QFIM, irrespective of the
commutativity or noncommutativity of the Hamiltonians. This is a fact which is independent of re-parametrization, indicating that the multiparameter QCRB is not a useful bound in this estimation scenario. 
However, in such estimation scenarios (unitary encoding of noncommuting Hamiltonians), one can use the HCRB as our fundamental bound without imposing the weak-commutation condition. Thus, for a single-qubit probe, we will look for an input state which minimizes the HCRB, as it is still an achievable bound in this estimation task. 

Since varying the parameters $\bm{\varphi}$ of a unitary encoding $U(\bm{\varphi})$ does not vary the norm of the Bloch vector of the encoded single qubit state, our model is a \textit{D-invariant} quantum statistical model~\cite{e21070703}. For such models, the HCRB has been analytically derived~\cite{Suzuki} and it is given by
\begin{equation}
\label{Holevo_first}   
\mathcal{\mathbf{C}}^{ \textbf{\text{Holevo}}}=\Tr{(\mathcal{W}\mathcal{F}^{-1}_Q)}+\bigg|\bigg|\sqrt{\mathcal{W}}\mathcal{F}^{-1}_QD\mathcal{F}^{-1}_Q\sqrt{\mathcal{W}}\bigg|\bigg|_1.
\end{equation}

In order to find the expression of the HCRB for the two-parameter estimation problem explicitly, we need to choose a weight or cost matrix $\mathcal{W}$. We choose a general $2\times 2$ real, positive-definite weight matrix as follows:
\begin{equation}
\label{weight}
    \mathcal{W}=\begin{pmatrix}
        w_{11}\quad w_{12}\\
        w_{12}\quad w_{22}
    \end{pmatrix}
\end{equation}
with the conditions $w_{11},w_{22}>0$. The determinant of $\mathcal{W}$, given by $\det(\mathcal{W}) \coloneqq w_{11}w_{22} - w_{12}^2$, must also satisfy $\det(\mathcal{W}) > 0$ to ensure the positive definiteness of $\mathcal{W}$.
To ensure that both parameters are estimated, we restrict our analysis to full-rank weight matrices. This choice is motivated by the fact that a bound constructed using a rank-$K$ weight matrix effectively corresponds to the estimation of $K$ parameters~\cite{Belliardo_2021}.
\\
The determinant of the QFIM can be simplified as
\begin{align}
\label{simp_det}
\det(\mathcal{F}_Q)&= |\bm{\eta}_1|^2|\bm{\eta}_2|^2-(\bm{\eta}_1\cdot\bm{\eta}_2)^2-|\bm{\eta}_1|^2(\bm{r}\cdot\bm{\eta}_2)^2-\nonumber\\&|\bm{\eta}_2|^2(\bm{r}\cdot\bm{\eta}_1)^2 +2(\bm{\eta}_1\cdot\bm{\eta}_2)(\bm{r}\cdot\bm{\eta}_1)(\bm{r}\cdot\bm{\eta}_2)\nonumber,\\
&=\big|\bm{r}\big|^2\big|\bm{\eta}_1\times\bm{\eta}_2\big|^2-\big|\bm{r}\times\left(\bm{\eta}_1\times\bm{\eta}_2\right)\big|^2,\nonumber\\&=\left(\bm{r}\cdot\left(\bm{\eta}_1\times\bm{\eta}_2\right)\right)^2.
\end{align}
In the above simplification of $\det(\mathcal{F}_Q)$, we have used the facts that 
$\displaystyle{|\bm{a}_1|^2|\bm{a}_2|^2-(\bm{a}_1\cdot\bm{a}_2)^2} = |\bm{a}_1\times\bm{a}_2|^2$ for any two three-dimensional vectors $\bm{a}_1, \bm{a}_2$, the three-dimensional vector triple-product formula $\bm{r}\times\left(\bm{\eta}_1\times\bm{\eta}_2\right)=\bm{\eta}_1\left(\bm{r}\cdot\bm{\eta}_2\right)-\bm{\eta}_2\left(\bm{r}\cdot\bm{\eta}_1\right)$, and $\big|\bm{r}\big|=1$ since we only consider pure single-qubit probe states.

$\Tr{(\mathcal{W}\mathcal{F}^{-1}_Q)}$ can be simplified as
\begin{align}
&\Tr{(\mathcal{W}\mathcal{F}^{-1}_Q)}\nonumber\\&= \frac{1}{\det (\mathcal{F}_Q)}\bigg[w_{11}\left[|\bm{\eta}_2|^2-\left(\bm{r}\cdot\bm{\eta}_2\right)^2\right]+w_{22}\left[|\bm{\eta}_1|^2-\left(\bm{r}\cdot\bm{\eta}_1\right)^2\right]\nonumber \\&-2w_{12}\left[\,\bm{\eta}_1\cdot\bm{\eta}_2-(\bm{r}\cdot\bm{\eta}_1)(\bm{r}\cdot\bm{\eta}_2)\,\right] \bigg],\nonumber\\    &=\frac{1}{\left(\bm{r}\cdot\left(\bm{\eta}_1\times\bm{\eta}_2\right)\right)^2}\bigg[w_{11}\left(\big|\bm{r}\times\bm{\eta}_2\big|^2\right)+w_{22}\left(\big|\bm{r}\times\bm{\eta}_1\big|^2\right)\nonumber\\&-2w_{12}\left(\bm{r}\times\bm{\eta}_1\right)\cdot\left(\bm{r}\times\bm{\eta}_2\right)\bigg], \label{Q_expr}\\&= \frac{\mathcal{Q}}{\left(\bm{r}\cdot\left(\bm{\eta}_1\times\bm{\eta}_2\right)\right)^2}\label{hcrb_first},
\end{align}
 where $\mathcal{Q}\coloneqq w_{11}\left(\big|\bm{r}\times\bm{\eta}_2\big|^2\right)+w_{22}\left(\big|\bm{r}\times\bm{\eta}_1\big|^2\right)\nonumber-2w_{12}\left(\bm{r}\times\bm{\eta}_1\right)\cdot\left(\bm{r}\times\bm{\eta}_2\right),\nonumber$. The last equality in Eq.~\eqref{Q_expr} is derived by using the facts that $|\bm{r}|=1$, $|\bm{\eta}_i|^2-\left(\bm{r}\cdot\bm{\eta}_i\right)^2=|\bm{r}\times\bm{\eta}_i|^2,\,\text{ where }i\in\lbrace{1,2\rbrace}$ and finally, $\left(\bm{r}\times\bm{\eta}_1\right)\cdot\left(\bm{r}\times\bm{\eta}_2\right)=\left(\bm{\eta}_1\cdot\bm{\eta}_2\right)-\left(\bm{r}\cdot\bm{\eta}_1\right)\left(\bm{r}\cdot\bm{\eta}_2\right).$

The Uhlmann matrix $D$ is an anti-symmetric matrix, whereas $\mathcal{F}_Q$ is a real symmetric, positive semidefinite matrix. 
Using the symmetry properties of $D$ and $\mathcal{F}_Q$, in the two-parameter problem, it is simple to obtain
\begin{equation*}
    \mathcal{F}^{-1}_Q D \mathcal{F}^{-1}_Q=\frac{D}{\det  (\mathcal{F}_Q)}.
\end{equation*}
Similarly, we also have
\begin{equation*}
\sqrt{\mathcal{W}}\mathcal{F}^{-1}_Q D \mathcal{F}^{-1}_Q\sqrt{\mathcal{W}}=\frac{\sqrt{\det\left(\mathcal{W}\right)}}{\det\left(\mathcal{F}_Q\right)}D.
\end{equation*}
Therefore, the trace-norm term of the HCRB in our case can be written as
\begin{align}
\bigg|\bigg|\sqrt{W} \mathcal{F}^{-1}_Q D\mathcal{F}^{-1}_Q \sqrt{W}\bigg|\bigg|_1&=2\sqrt{\det(\mathcal{W})}\frac{|\bm{r}\cdot\left(\bm{\eta}_1\times\bm{\eta}_2\right)|}{\det\left(\mathcal{F}_Q\right)},\nonumber\\
&=\frac{2\sqrt{\det(\mathcal{W})}}{\big|\bm{r}\cdot\left(\bm{\eta}_1\times\bm{\eta}_2\right)\big|} \label{hcrb_second}.
\end{align}

Thus, substituting the forms of $\Tr{(\mathcal{W}\mathcal{F}^{-1}_Q)}$, and $\bigg|\bigg|\sqrt{W} \mathcal{F}^{-1}_Q D\mathcal{F}^{-1}_Q \sqrt{W}\bigg|\bigg|_1$, written in Eqs.~\eqref{hcrb_first}, and~\eqref{hcrb_second} respectively, into Eq.~\eqref{Holevo_first}, the HCRB for input probe $\ket{\psi}$ in our case can be written as
\begin{align}
\label{HCRB}
&\mathcal{\mathbf{C}}^{\textbf{\text{Holevo}}}=\frac{\mathcal{Q}}{\left(\bm{r}\cdot\left(\bm{\eta}_1\times\bm{\eta}_2\right)\right)^2}+\frac{2\sqrt{\det(\mathcal{W})}}{\big|\bm{r}\cdot\left(\bm{\eta}_1\times\bm{\eta}_2\right)\big|},
\end{align}
where 
$\boldsymbol{r}$ denotes the Bloch vector of the input probe $\ket{\psi}$.



\begin{theorem}
In the estimation of two arbitrary phases, encoded into any single-qubit probe through any arbitrary unitary, the optimal single-qubit probe that exhibits the best achievable precision in the estimation by minimizing the Holevo Cram\'er-Rao bound is unique and 
its Bloch vector is given by
\begin{equation}
\label{best_probe}    \bm{r}_{\text{opt}}=\pm\frac{\bm{\eta}_1\times\bm{\eta}_2}{\big|\bm{\eta}_1\times\bm{\eta}_2\big|}.
\end{equation}
Moreover, the optimal probe is independent of the choice of the weight/cost matrix and also, this optimal probe maximizes the determinant of the quantum Fisher information matrix.
\end{theorem}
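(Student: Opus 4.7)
The plan is to minimise the HCRB in Eq.~\eqref{HCRB} over the unit Bloch sphere $|\bm{r}|=1$. Setting $g(\bm{r})\coloneqq \bm{r}\cdot(\bm{\eta}_1\times\bm{\eta}_2)$, the bound reads $\mathcal{Q}(\bm{r})/g(\bm{r})^2 + 2\sqrt{\det\mathcal{W}}/|g(\bm{r})|$. I would first note that the second summand is strictly decreasing in $|g(\bm{r})|$, and the Cauchy--Schwarz inequality gives $|g(\bm{r})|\leq |\bm{\eta}_1\times\bm{\eta}_2|$, with equality iff $\bm{r}\parallel(\bm{\eta}_1\times\bm{\eta}_2)$. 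Consequently, the second summand alone is uniquely minimised at $\bm{r}_{\text{opt}}=\pm(\bm{\eta}_1\times\bm{\eta}_2)/|\bm{\eta}_1\times\bm{\eta}_2|$; the substance of the proof is to show that the first summand is minimised at the same point and that no trade-off can shift the overall minimum elsewhere.

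To handle the first summand, I would decompose $\bm{r}=\bm{r}_\parallel+\bm{r}_\perp$, with $\bm{r}_\perp$ along $\bm{\eta}_1\times\bm{\eta}_2$ and $\bm{r}_\parallel$ in the plane spanned by $\bm{\eta}_1,\bm{\eta}_2$. Setting $t\coloneqq|\bm{r}_\perp|\in[0,1]$, the unit-norm constraint gives $|\bm{r}_\parallel|^2=1-t^2$ and $|g(\bm{r})|=t\,|\bm{\eta}_1\times\bm{\eta}_2|$. Since $\bm{r}_\perp\cdot\bm{\eta}_i=0$, the quantity $\mathcal{Q}$ depends on $\bm{r}$ only through $\bm{r}_\parallel$, and one obtains $\mathcal{Q}(\bm{r})=\mathcal{Q}_0-M(\bm{r}_\parallel)$, where $\mathcal{Q}_0\coloneqq w_{11}|\bm{\eta}_2|^2+w_{22}|\bm{\eta}_1|^2-2w_{12}\,\bm{\eta}_1\cdot\bm{\eta}_2$ is the value attained at $\bm{r}_\parallel=\bm{0}$ and $M(\bm{v})\coloneqq w_{11}(\bm{v}\cdot\bm{\eta}_2)^2+w_{22}(\bm{v}\cdot\bm{\eta}_1)^2-2w_{12}(\bm{v}\cdot\bm{\eta}_1)(\bm{v}\cdot\bm{\eta}_2)$.

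The main obstacle, and the technical heart of the argument, is the quadratic-form inequality $M(\bm{v})\leq \mathcal{Q}_0\,|\bm{v}|^2$ for every $\bm{v}$ in the plane of $\bm{\eta}_1,\bm{\eta}_2$, with strict inequality when $\bm{v}\neq\bm{0}$. I would establish this by writing $M$ in an orthonormal basis of that plane (taking $\bm{\eta}_1$ along one axis and $\bm{\eta}_2$ in the $xy$-plane makes the computation transparent) as a $2\times 2$ symmetric matrix $A_\parallel$. A direct calculation then yields $\Tr(A_\parallel)=\mathcal{Q}_0$ and $\det(A_\parallel)=|\bm{\eta}_1\times\bm{\eta}_2|^2\det(\mathcal{W})>0$, using the full-rank assumption on $\mathcal{W}$ and $\bm{\eta}_1\times\bm{\eta}_2\neq\bm{0}$. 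Hence $A_\parallel$ is positive-definite with both eigenvalues strictly positive, so its larger eigenvalue is strictly less than $\Tr(A_\parallel)=\mathcal{Q}_0$, which gives the inequality. Substituting yields $\mathcal{Q}(\bm{r})\geq \mathcal{Q}_0\,t^2$ with equality only at $\bm{r}_\parallel=\bm{0}$, and dividing by $g(\bm{r})^2=t^2|\bm{\eta}_1\times\bm{\eta}_2|^2$ gives $\mathcal{Q}(\bm{r})/g(\bm{r})^2\geq \mathcal{Q}_0/|\bm{\eta}_1\times\bm{\eta}_2|^2$ with the same equality condition. Combining with $1/t\geq 1$ delivers $\mathcal{C}^{\text{Holevo}}(\bm{r})\geq \mathcal{Q}_0/|\bm{\eta}_1\times\bm{\eta}_2|^2+2\sqrt{\det\mathcal{W}}/|\bm{\eta}_1\times\bm{\eta}_2|$, with both inequalities simultaneously saturated iff $t=1$, i.e.\ at $\bm{r}_{\text{opt}}$.

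The remaining claims are immediate corollaries. The expression for $\bm{r}_{\text{opt}}$ involves only $\bm{\eta}_1,\bm{\eta}_2$, so the optimal probe is manifestly independent of $\mathcal{W}$; and Eq.~\eqref{simp_det} identifies $\det\mathcal{F}_Q(\bm{r})$ with $g(\bm{r})^2$, whose unique unit-norm maximiser, again by Cauchy--Schwarz, is precisely the antipodal pair $\pm(\bm{\eta}_1\times\bm{\eta}_2)/|\bm{\eta}_1\times\bm{\eta}_2|$, so the Holevo-optimal probe also maximises the determinant of the QFIM.
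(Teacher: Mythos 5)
Your pure-state argument is correct, and it is essentially a coordinate-free repackaging of the paper's own computation: your identity $\mathcal{Q}(\bm{r})=\mathcal{Q}_0-M(\bm{r}_\parallel)$ together with the bound $M(\bm{v})<\mathcal{Q}_0|\bm{v}|^2$ (via $\Tr(A_\parallel)=\mathcal{Q}_0$, $\det(A_\parallel)=|\bm{\eta}_1\times\bm{\eta}_2|^2\det(\mathcal{W})>0$) plays exactly the role of the paper's decomposition $\mathcal{Q}=t_1\cos^2(\chi)+t_2\sin^2(\chi)$ with $t_1,t_2\geq 0$, after which both terms of Eq.~\eqref{HCRB} are simultaneously minimized at $\bm{r}\parallel(\bm{\eta}_1\times\bm{\eta}_2)$. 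If anything, your eigenvalue bound makes the strictness (hence uniqueness up to the sign) slightly more explicit. The same holds for the $\det(\mathcal{F}_Q)$ claim restricted to pure probes, which is immediate from Eq.~\eqref{simp_det}.

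However, there is a genuine gap: the theorem is stated for \emph{any} single-qubit probe, and you only optimize over the unit Bloch sphere, i.e., over pure states. You cannot close this by simply letting $|\bm{r}|<1$ in Eq.~\eqref{HCRB}, because that expression was derived using $|\bm{r}|=1$ (e.g., in the simplification $\det(\mathcal{F}_Q)=(\bm{r}\cdot(\bm{\eta}_1\times\bm{\eta}_2))^2$), so it is not the HCRB of a mixed probe. The paper handles this with a separate step: for a unitary encoding of a mixed qubit state, $\mathcal{F}_Q(\rho_{\bm{\varphi}})=\left(2\Tr(\rho^2)-1\right)\mathcal{F}_Q\left(\ketbra{\psi(\bm{\varphi})}{\psi(\bm{\varphi})}\right)$ and $D(\rho_{\bm{\varphi}})=\pm\left(2\Tr(\rho^2)-1\right)^{3/2}D\left(\ketbra{\psi(\bm{\varphi})}{\psi(\bm{\varphi})}\right)$, where $\ket{\psi(\bm{\varphi})}$ is an eigenstate of $\rho_{\bm{\varphi}}$; plugging these into the D-invariant HCRB formula shows every mixed probe is beaten by a pure one (and similarly $\det(\mathcal{F}_Q(\rho_{\bm{\varphi}}))=\left(2\Tr(\rho^2)-1\right)^2\det(\mathcal{F}_Q(\ket{\psi_{\bm{\varphi}}}))$ settles the determinant claim over all states). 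Some argument of this kind—HCRB is not obviously monotone under mixing, so it does need the qubit-specific purity scaling—is required before you may conclude that Eq.~\eqref{best_probe} is optimal among all single-qubit probes rather than only among pure ones.
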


\begin{proof}
We present the proof in two steps. First, we find the optimal input state among all pure single-qubit input probes in this estimation task. Then, we show that no mixed single-qubit input state can achieve better precision than this optimal pure state in this scenario.

Let us consider an orthonormal basis of $\mathbb{R}^3$ denoted by $\lbrace{\bm{e}_i\rbrace};\text{ where }i\in\lbrace{1,2,3\rbrace}$, such that $\bm{\eta}_1$ points in the direction of $\bm{e}_1$. We also consider $\bm{\eta}_2$ to lie in the plane spanned by $\bm{e}_1$ and $\bm{e}_2$, making an angle $\gamma$ with $\bm{\eta}_1$. Thus, $(\bm{\eta}_1\times\bm{\eta}_2)$ points in the direction of $\bm{e}_3$.
We can always consider the Bloch vector $\bm{r}$ of any single qubit pure state to have the following components along the three orthogonal directions corresponding to this basis as
\begin{equation*} \bm{r}\cdot\bm{e}_1=\sin{(\chi)}\cos{(\varsigma)}\,,\,\bm{r}\cdot\bm{e}_2=\sin(\chi)\sin{(\varsigma)}\,,\,\bm{r}\cdot\bm{e}_3=\cos{(\chi)},
\end{equation*}
 where $\chi\in\left[0,\pi\right]$ and $\varsigma\in[0,2\pi]$. Substituting the expression of $\bm{r}$ written in the $\lbrace{\bm{e}_i\rbrace}$ basis in the expression of $\mathcal{Q}$, given in Eq.~\eqref{Q_expr}, we obtain
\begin{align*}
\mathcal{Q}&=t_1\cos^2(\chi)+t_2\sin^2(\chi),
\end{align*}
where $t_1\coloneqq \bigg[w_{22}|\bm{\eta}_1|^2+w_{11}|\bm{\eta}_2|^2-2w_{12}|\bm{\eta}_1||\bm{\eta}_2|\cos{(\gamma)}\bigg],$ and $
t_2\coloneqq \bigg[w_{11}|\bm{\eta}_2|^2\sin^2{(\gamma-\varsigma)}+w_{22}|\bm{\eta}_1|^2\sin^2{(\varsigma)}+2|\bm{\eta}_1||\bm{\eta}_2|w_{12}\sin{(\gamma-\varsigma)}\sin{(\varsigma)}\bigg]$.
Let us define two two-dimensional vectors $\bm{v}_1,\bm{v}_2$ as $\bm{v}_1\coloneqq
\begin{pmatrix}   |\bm{\eta}_1|,\hspace{0.05mm}|\bm{\eta}_2|
\end{pmatrix}^T$
and $\bm{v}_2\coloneqq
\begin{pmatrix} |\bm{\eta}|_1\sin{(\varsigma)},&\hspace{0.05mm}|\bm{\eta}_2|\sin{(\gamma-\varsigma)}
\end{pmatrix}^T$ respectively.
The terms $t_1$ , $t_2$ can be expressed in terms of vectors $\bm{v}_1,\bm{v}_2$ as follows:
\begin{align*}
&t_1=\bm{v}^{T}_1
\begin{pmatrix}
        w_{22}&-w_{12}\cos{(\gamma)}\\
        -w_{12}\cos{(\gamma)}&w_{11}
\end{pmatrix}\bm{v}_1,\\
&\text{and} \quad 
t_2=\bm{v}^T_2
\begin{pmatrix}
     w_{22}&w_{12}\\
        w_{12}&w_{11}
\end{pmatrix}
\bm{v}_2.
\end{align*}
Note that both the terms $t_1$ and $t_2$ are non-negative because $\mathcal{W}$ is a positive-definite matrix.
Similarly, using the expression of the Bloch vector $\bm{r}$ in the $\lbrace{\bm{e}_i\rbrace}$ basis, the determinant of the QFIM is given by
$\det(\mathcal{F}_Q)=\left(\bm{r}\cdot\left(\bm{\eta}_1\times\bm{\eta}_2\right)\right)^2= |\bm{\eta}_1|^2|\bm{\eta}_2|^2\sin^2{(\gamma)}\cos^2{(\chi)}$.

The first term of the HCRB, written in Eq.~\eqref{hcrb_second}, can be simplified as
\begin{align*}
\frac{\mathcal{Q}}{\left(\bm{r}\cdot\left(\bm{\eta}_1\times\bm{\eta}_2\right)\right)^2}=\frac{1}{|\bm{\eta}_1|^2|\bm{\eta}_2|^2\sin^2(\gamma)}\left[t_1+t_2\tan^2(\chi)\right].
\end{align*}
Since $t_1$ and $t_2$ are non-negative, the minimum of $\frac{\mathcal{Q}}{\left(\bm{r}\cdot\left(\bm{\eta}_1\times\bm{\eta}_2\right)\right)^2}$ over all possible $\chi$ occurs at $\chi=0$, i.e., when $\bm{r}$ is in the direction of $(\bm{\eta}_1\times\bm{\eta}_2)$.
We can see that the second term of the HCRB written in Eq.~\eqref{hcrb_second} will also get minimum over all possible $\bm{r}$ when $\bm{r}$ is aligned with the direction of $(\bm{\eta}_1\times\bm{\eta}_2)$.
Therefore, the probe state mentioned in Eq.~\eqref{best_probe} is the optimal probe among all pure single-qubit input states in this case.

So far, we have found the optimal probe state for estimating the two unitarily encoded parameters $\bm{\varphi}$ by minimizing the HCRB over all pure single-qubit probes. We now show that no mixed single-qubit probe can achieve better precision than this optimal pure state given in Eq.~\eqref{best_probe} as follows.

Let us consider a mixed single-qubit input state, denoted by $\rho$. 
It is known that when the encoded process is a unitary parametrization process and input states are single-qubit mixed states, the QFIM of the encoded state $\rho_{\bm{\varphi}} (\coloneqq U(\bm{\varphi}) \rho U(\bm{\varphi})^\dagger)$ corresponding to the input state $\rho$, in such a scenario, can be expressed as~\cite{Liu2015}
\begin{align*}   \mathcal{F}_Q(\rho_{\bm{\varphi}})=\left(2\Tr(\rho^2)-1\right)\,\mathcal{F}_Q\left(\ketbra{\psi(\bm{\varphi})}{\psi(\bm{\varphi})}\right).
\end{align*}
Here $\mathcal{F}_Q\left(\ketbra{\psi(\bm{\varphi})}\right)$ denotes the QFIM for any one of the two eigenstates of the encoded state $\rho(\bm{\varphi})$.
The Uhlmann matrix for the encoded state $\rho(\bm{\varphi})$ can be written as
\begin{align*}
D(\rho_{\bm{\varphi}})=\pm\left(2\Tr(\rho^2)-1\right)^{3/2}\,D\left(\ketbra{\psi(\bm{\varphi})}{\psi(\bm{\varphi})}\right),
\end{align*}
where $``+"$ sign holds when $\ket{\psi(\bm{\varphi})}$ is the eigenstate of the state $\rho(\bm{\varphi})$ with the greater eigenvalue; otherwise, $``-"$ sign holds. See Appendix~\ref{purity-uhlmann} for a detailed derivation of $D(\rho_{\bm{\varphi}})$.

The HCRB will be for the encoded state $\rho_{\bm{\varphi}}$  corresponding to the input state $\rho$ will have the form
\begin{widetext}
  \begin{align*} 
&\mathcal{\mathbf{C}}^{ \textbf{\text{Holevo}}} (\rho_{\bm{\varphi}}) = \frac{1}{\left(2\Tr(\rho^2)-1\right)}\Tr{\bigg[\mathcal{W}\mathcal{F}^{-1}_Q\big(\ket{\psi(\bm{\varphi})}\big) \bigg]}+ \frac{1}{\sqrt{\left(2\Tr(\rho^2)-1\right)}}
\bigg|\bigg|\sqrt{\mathcal{W}}\mathcal{F}^{-1}_Q \big(\ket{\psi(\bm{\varphi})}\big) D \big(\ket{\psi(\bm{\varphi})}\big) \mathcal{F}^{-1}_Q \big(\ket{\psi(\bm{\varphi})}\big)\sqrt{\mathcal{W}}\bigg|\bigg|_1, 
\end{align*}  
\end{widetext}
where $\ket{\psi(\bm{\varphi})}$ is one of the two eigenstates of the encoded state $\rho_{\bm{\varphi}}$. Therefore, $\mathcal{\mathbf{C}}^{ \textbf{\text{Holevo}}}$ of state $\ket{\psi(\bm{\varphi})}$, must be lower than $\mathcal{\mathbf{C}}^{ \textbf{\text{Holevo}}}$ of state $\rho_{\bm{\varphi}}$ where $\rho_{\bm{\varphi}} \coloneqq U(\bm{\varphi}) \rho U(\bm{\varphi})^\dagger, \ket{\psi(\bm{\varphi})} \coloneqq U(\bm{\varphi}) \ket{\psi}$. 

Since there always exists a pure input probe for each mixed probe, which provides better precision compared to the mixed input probe in this scenario, and the state written in Eq.~\eqref{best_probe} is the optimal probe among all pure input states in this case, therefore the state written in Eq.~\eqref{best_probe} is the optimal one among all input states in this estimation task when we use HCRB as our metrological bound.

Moreover, note that the determinant of the QFIM of any pure single-qubit input state in this case is given by
$\det(\mathcal{F}_Q)=16|\bm{\eta}_1|^2|\bm{\eta}_2|^2\sin^2{(\gamma)}\cos^2{(\chi)}$
and we can see that it gets maximum among all pure single-qubit input states when the Bloch vector of the input state has the form written in Eq.~\eqref{best_probe}. On the other hand, for any mixed single-qubit input state $\rho$, the determinant of the QFIM will have the form $\det(\mathcal{F}_Q (\rho_{\bm{\varphi}})) = \left(2\Tr(\rho^2)-1\right)^{2} \det(\mathcal{F}_Q (\ket{\psi_{\bm{\varphi}}}))$, where $\rho_{\bm{\varphi}}$ is the encoded state corresponding to the input state $\rho$ in this case and $\ket{\psi_{\bm{\varphi}}}$ is one of the eigenstates of $\rho_{\bm{\varphi}}$. Since $\Tr{(\rho^2)} \leq 1$, $\det(\mathcal{F}_Q (\rho_{\bm{\varphi}})) \leq \det(\mathcal{F}_Q (\ket{\psi_{\bm{\varphi}}}))$, i.e., for each mixed single-qubit input state, the determinant of the QFIM corresponding to the mixed state is always less than the determinant of the QFIM of the encoded state corresponding to one of the eigenstates of the mixed state. Hence, the input state, written in Eq.~\eqref{best_probe}, will provide the maximum determinant of QFIM among all single-qubit input states in this estimation task.

Thus, it concludes the proof. 
\end{proof}

Substituting $\bm{r}=\bm{r}_{opt}$, as given in Eq.~\eqref{best_probe}, in the expression of the HCRB given in Eq.~\eqref{min_holevo}, we obtain the minimum HCRB in the simultaneous estimation of two arbitrary phases via arbitrary unitary encoding processes using all single-qubit probes, as discussed below.

\begin{corollary}
The optimal or minimum possible value of the Holevo Cram\'er-Rao bound for the simultaneous estimation of any two arbitrary phases $\bm{\varphi} \coloneqq \{\varphi_1, \varphi_2\} $, encoded through unitary transformation using single-qubit probes, is given by
\begin{align}
\label{min_holevo}
&\min_{\displaystyle{\rho\in \mathcal{H}^2}}\mathbf{C^{\,Holevo}}\Big(U(\bm{\varphi})\rho U^{\dagger}(\bm{\varphi});\mathcal{W}\Big)\nonumber\\&=\frac{w_{11}\big|\bm{\eta}_{2}\big|^2+w_{22}\big|\bm{\eta}_1\big|^2-2w_{12}\left(\bm{\eta}_1\cdot\bm{\eta}_2\right)}{\big|\bm{\eta}_1\times\bm{\eta}_2\big|^2}+\frac{2\sqrt{\det\left(\mathcal{W}\right)}}{\big|\bm{\eta}_1\times\bm{\eta}_2\big|},
\end{align}
where the forms of $\bm{\eta}_1$ and $\bm{\eta}_2$ are written in Appendix~\ref{eta_expr}. $\mathcal{W}$ denotes the weight matrix, parameterized by $w_{11},w_{22},~\text{and} \hspace{2 mm} w_{12}$, and it is written in Eq.~\eqref{weight}. 
\end{corollary}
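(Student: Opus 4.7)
The plan is to treat this corollary as a direct computational consequence of the preceding theorem. Since the theorem already identifies $\bm{r}_{\text{opt}}=\pm(\bm{\eta}_1\times\bm{\eta}_2)/|\bm{\eta}_1\times\bm{\eta}_2|$ as the unique minimizer of the HCRB over all single-qubit probe states, the only remaining task is to evaluate the expression in Eq.~\eqref{HCRB} at $\bm{r}=\bm{r}_{\text{opt}}$ and simplify. There is no separate optimization to perform.

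First I would record the geometric features of $\bm{r}_{\text{opt}}$ that drive every simplification: it is a unit vector parallel to $\bm{\eta}_1\times\bm{\eta}_2$, hence orthogonal to both $\bm{\eta}_1$ and $\bm{\eta}_2$. This gives $\bm{r}_{\text{opt}}\cdot\bm{\eta}_i=0$ for $i=1,2$, and $\bm{r}_{\text{opt}}\cdot(\bm{\eta}_1\times\bm{\eta}_2)=\pm|\bm{\eta}_1\times\bm{\eta}_2|$. Consequently, in the denominators appearing in Eq.~\eqref{HCRB}, the scalar triple product squared collapses to $|\bm{\eta}_1\times\bm{\eta}_2|^2$ and its absolute value to $|\bm{\eta}_1\times\bm{\eta}_2|$, producing the $1/|\bm{\eta}_1\times\bm{\eta}_2|$ and $1/|\bm{\eta}_1\times\bm{\eta}_2|^2$ factors that appear in the stated formula.

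Next I would simplify the numerator $\mathcal{Q}$ from Eq.~\eqref{Q_expr}. Using $|\bm{a}\times\bm{b}|^2=|\bm{a}|^2|\bm{b}|^2-(\bm{a}\cdot\bm{b})^2$ together with $|\bm{r}_{\text{opt}}|=1$ and the orthogonality relations above, one obtains $|\bm{r}_{\text{opt}}\times\bm{\eta}_i|^2=|\bm{\eta}_i|^2$ for each $i$. For the cross-term I would reuse the identity $(\bm{r}\times\bm{\eta}_1)\cdot(\bm{r}\times\bm{\eta}_2)=(\bm{\eta}_1\cdot\bm{\eta}_2)-(\bm{r}\cdot\bm{\eta}_1)(\bm{r}\cdot\bm{\eta}_2)$, already invoked in deriving Eq.~\eqref{Q_expr}, so that the second piece vanishes and only $\bm{\eta}_1\cdot\bm{\eta}_2$ survives. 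Hence $\mathcal{Q}\big|_{\bm{r}=\bm{r}_{\text{opt}}}=w_{11}|\bm{\eta}_2|^2+w_{22}|\bm{\eta}_1|^2-2w_{12}(\bm{\eta}_1\cdot\bm{\eta}_2)$.

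Substituting these two simplifications into Eq.~\eqref{HCRB} yields exactly the expression displayed in Eq.~\eqref{min_holevo}. I do not expect any real obstacle here; the entire derivation is a mechanical substitution, and the only thing to be careful about is bookkeeping the orthogonality $\bm{r}_{\text{opt}}\cdot\bm{\eta}_i=0$ correctly when simplifying both $\mathcal{Q}$ and the scalar triple product. The sign ambiguity $\pm$ in $\bm{r}_{\text{opt}}$ is harmless because every quantity appearing in the final formula depends on $\bm{r}_{\text{opt}}$ only through its absolute value or square.
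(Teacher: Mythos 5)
Your proposal is correct and follows essentially the same route as the paper: the paper likewise obtains the corollary by substituting $\bm{r}=\bm{r}_{\text{opt}}$ from Eq.~\eqref{best_probe} into the HCRB expression of Eq.~\eqref{HCRB}, and your explicit bookkeeping (orthogonality $\bm{r}_{\text{opt}}\cdot\bm{\eta}_i=0$, hence $|\bm{r}_{\text{opt}}\times\bm{\eta}_i|^2=|\bm{\eta}_i|^2$, the cross-term reducing to $\bm{\eta}_1\cdot\bm{\eta}_2$, and the triple product collapsing to $\pm|\bm{\eta}_1\times\bm{\eta}_2|$) simply spells out the algebra the paper leaves implicit. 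Nothing further is needed.
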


Note that the single-qubit probe written in Eq.~\eqref{best_probe}, which acts as the optimal single-qubit probe in the simultaneous estimation of the two arbitrary $\mathbb{SU}(2)$ phases $\bm{\varphi}$, is independent of the choice of weight matrix.  Since using different weight matrices is equivalent to the estimation of a re-parameterized set of parameters $\bm{\varphi}'=\bm{\varphi}'(\bm{\varphi})$~\cite{paris2022}, we can conclude that the optimal single-qubit probe is indeed optimal for the simultaneous estimation of any arbitrary two $\mathbb{SU}(2)$ encoded parameters which can be re-parameterized into our original phase-estimation problem.

\begin{corollary}
The optimal single-qubit probe state in the simultaneous estimation of two arbitrary unitary encoded phases is an equally weighted superposition of the highest and lowest eigenstates of the total encoding Hamiltonian with a fixed relative phase, i.e., it is maximally coherent in the eigenbasis of the total encoding Hamiltonian. 
Thus, maximal quantum coherence in the eigenbasis of the total encoding Hamiltonian is a necessary condition for achieving the best possible precision in the simultaneous estimation of two arbitrary unitary encoded phases within the single-qubit probe, but it is 
not sufficient. 
\end{corollary}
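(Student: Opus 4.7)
The plan is to reduce the corollary to a geometric statement about Bloch vectors, and then carry out one short vector‐calculus computation using the explicit form of $\bm{\eta}_1,\bm{\eta}_2$ supplied by Wilcox's formula.

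First, I would identify the eigenstates of the total encoding Hamiltonian. Since $H_{\mathrm{tot}} \coloneqq \varphi_1 H_1 + \varphi_2 H_2 = \tfrac{1}{2}\bm{b}\cdot\bm{\sigma}$ with $\bm{b}\coloneqq\varphi_1\bm{a}_1+\varphi_2\bm{a}_2$, the two eigenvectors of $H_{\mathrm{tot}}$, with eigenvalues $\pm|\bm{b}|/2$, correspond to Bloch vectors $\pm\hat{\bm{b}}\coloneqq\pm\bm{b}/|\bm{b}|$. A pure qubit state is an equally weighted superposition of these two eigenstates (with some relative phase) iff its Bloch vector lies in the equatorial plane orthogonal to $\hat{\bm{b}}$; the precise azimuth in that plane sets the relative phase. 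Thus the corollary reduces to showing (a) that the optimal Bloch vector $\bm{r}_{\mathrm{opt}}=\pm(\bm{\eta}_1\times\bm{\eta}_2)/|\bm{\eta}_1\times\bm{\eta}_2|$ from Theorem~1 is orthogonal to $\hat{\bm{b}}$, and (b) that the azimuth is uniquely fixed up to a sign, so maximal coherence alone does not pick out $\bm{r}_{\mathrm{opt}}$.

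For (a), the main computation is to prove $(\bm{\eta}_1\times\bm{\eta}_2)\cdot\bm{b}=0$. I would use the Wilcox expression (Eq.~\eqref{wilcox}) applied to $U^\dagger=e^{iA}$ with $A=\tfrac{1}{2}\bm{b}\cdot\bm{\sigma}$. Conjugation by $e^{isA}$ rotates $\bm{a}_i$ about $\hat{\bm{b}}$, so after integrating over $s\in[0,1]$ one obtains
\begin{equation*}
\bm{\eta}_i=-\frac{\sin b}{b}\bm{a}_i^{\perp}-\frac{1-\cos b}{b}(\hat{\bm{b}}\times\bm{a}_i^{\perp})-(\bm{a}_i\cdot\hat{\bm{b}})\hat{\bm{b}},
\end{equation*}
where $\bm{a}_i^{\perp}\coloneqq\bm{a}_i-(\bm{a}_i\cdot\hat{\bm{b}})\hat{\bm{b}}$ and $b\coloneqq|\bm{b}|$. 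Decomposing $\bm{\eta}_1\times\bm{\eta}_2$ into parts parallel and perpendicular to $\hat{\bm{b}}$, the cross terms $(\hat{\bm{b}})\times(\hat{\bm{b}}\times\bm{a}_j^{\perp})$ cancel in pairs, and the only component along $\hat{\bm{b}}$ is proportional to $\bm{a}_1^{\perp}\times\bm{a}_2^{\perp}$. Here is the clean observation that finishes the argument: since $\bm{b}$ is itself along $\hat{\bm{b}}$, projecting $\bm{b}=\varphi_1\bm{a}_1+\varphi_2\bm{a}_2$ onto the plane perpendicular to $\hat{\bm{b}}$ gives $\varphi_1\bm{a}_1^{\perp}+\varphi_2\bm{a}_2^{\perp}=\bm{0}$. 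Hence $\bm{a}_1^{\perp}$ and $\bm{a}_2^{\perp}$ are collinear, $\bm{a}_1^{\perp}\times\bm{a}_2^{\perp}=\bm{0}$, and $(\bm{\eta}_1\times\bm{\eta}_2)\cdot\hat{\bm{b}}=0$. So $\bm{r}_{\mathrm{opt}}\perp\hat{\bm{b}}$, i.e.\ the optimal probe is indeed an equal superposition of the extremal eigenstates of $H_{\mathrm{tot}}$.

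For (b), I would simply note that the set of maximally coherent states in the eigenbasis of $H_{\mathrm{tot}}$ is a one-parameter family swept out by rotating the Bloch vector around $\hat{\bm{b}}$ (the equatorial circle in the plane $\hat{\bm{b}}^{\perp}$), whereas Theorem~1 singles out the unique pair of antipodal directions $\pm(\bm{\eta}_1\times\bm{\eta}_2)/|\bm{\eta}_1\times\bm{\eta}_2|$ as minimizing the HCRB. For any other azimuth in $\hat{\bm{b}}^{\perp}$, $\bm{r}\cdot(\bm{\eta}_1\times\bm{\eta}_2)$ is strictly smaller in magnitude, which strictly increases both terms of $\mathbf{C}^{\mathbf{Holevo}}$ in Eq.~\eqref{HCRB}. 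Consequently maximal coherence in the $H_{\mathrm{tot}}$-eigenbasis is a necessary but not sufficient condition for optimality: the relative phase must be fixed to the value selected by $\bm{\eta}_1\times\bm{\eta}_2$. The main obstacle I anticipate is the bookkeeping in the Wilcox-based decomposition of $\bm{\eta}_i$, but once one writes $\bm{\eta}_i$ in the $\{\hat{\bm{b}},\,\bm{a}_i^{\perp},\,\hat{\bm{b}}\times\bm{a}_i^{\perp}\}$ frame, the cancellation $\bm{a}_1^{\perp}\times\bm{a}_2^{\perp}=\bm{0}$ carries the argument.
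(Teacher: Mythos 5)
Your argument is correct, and it reaches the paper's conclusion by a genuinely different route. The paper fixes coordinates ($\bm{a}_1=(1,0,0)^T$, $\bm{a}_2=(\cos\theta,\sin\theta,0)^T$), identifies the rotation axis $\hat{n}_{rot}\propto\varphi_1\bm{a}_1+\varphi_2\bm{a}_2$, and then verifies $(\bm{\eta}_1\times\bm{\eta}_2)\cdot\hat{n}_{rot}=0$ by direct substitution into the Appendix expressions for $\bm{\eta}_{1,2}$; the "fixed relative phase / not sufficient" part is then inherited from the uniqueness statement of Theorem~1. You instead work coordinate-free: decomposing $\bm{\eta}_i$ into components parallel and perpendicular to $\hat{\bm{b}}$ via the rotation picture behind Wilcox's formula, you observe that the in-plane parts of $\bm{\eta}_1$ and $\bm{\eta}_2$ are the \emph{same} linear combination of $\bm{a}_i^{\perp}$ and $\hat{\bm{b}}\times\bm{a}_i^{\perp}$, and that $\varphi_1\bm{a}_1^{\perp}+\varphi_2\bm{a}_2^{\perp}=\bm{0}$ forces $\bm{a}_1^{\perp}\parallel\bm{a}_2^{\perp}$, killing the only component of $\bm{\eta}_1\times\bm{\eta}_2$ along $\hat{\bm{b}}$ (in fact the whole in-plane cross product vanishes, so your pairwise cancellations are not even needed). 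This buys generality — it works for arbitrary, not necessarily coplanar-with-$x y$, encoding axes and exposes \emph{why} the orthogonality holds — whereas the paper's calculation is shorter given that the Appendix formulas are already in hand; your explicit check matches those formulas up to an inessential sign convention on the $\hat{\bm{b}}\times\bm{a}_i^{\perp}$ term. Two small polish points: the equivalence "equal-weight superposition of the $H_{\mathrm{tot}}$ eigenstates $\Leftrightarrow$ Bloch vector in the plane orthogonal to $\hat{\bm{b}}$" deserves the one-line justification you gesture at, and in part (b) the first HCRB term is only guaranteed to be non-decreasing as the azimuth moves off $\pm(\bm{\eta}_1\times\bm{\eta}_2)/|\bm{\eta}_1\times\bm{\eta}_2|$ (the numerator $\mathcal{Q}$ also changes), but the second term strictly increases since $|\bm{r}\cdot(\bm{\eta}_1\times\bm{\eta}_2)|$ strictly decreases, so the total is still strictly larger and your "necessary but not sufficient" conclusion stands.
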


\begin{proof}
For the sake of simplicity, let us choose the axes of encoding Hamiltonians to be $\bm{a}_1 \coloneqq \left(1,0,0\right)^T$ and $\bm{a}_2 \coloneqq \left(\cos(\theta),\sin(\theta),0\right)^T$ with $\theta \in \left[0,\pi\right]$.     
Our encoding unitary, $U(\bm{\varphi}) \coloneqq \exp{-\frac{i}{2}(\varphi_1 \bm{a}_1\cdot \bm{\sigma}+\varphi_2 \bm{a}_2\cdot\bm{\sigma})}$,
is a rotation operation about an axis whose unit vector is given by
\begin{equation}
\label{coherent_normal}   \hat{n}_{rot}=\frac{\left(\varphi_1+\varphi_2\cos{(\theta)},\varphi_2\sin{(\theta)},0\right)}{\sqrt{\varphi^2_1+\varphi^2_2+2\varphi_1\varphi_2\cos{(\theta)}}}.
\end{equation}

Let us consider pure single-qubit states which are equal superpositions of the two eigenstates of total encoding Hamiltonian $\displaystyle{H(\bm{\varphi})=\frac{1}{2}\left(\varphi_1\,\bm{a}_1\cdot\bm{\sigma}+\varphi_2\,\bm{a}_2\cdot\bm{\sigma}\right)}$, 
and are perpendicular to $\hat{n}_{rot}$. Let us denote the Bloch vectors of these pure single-qubit pure states by $\bm{r}=(r_x,r_y,r_z)^T$ with $|\bm{r}|=1$. Then, these single-qubit states must satisfy the following condition: 
\begin{equation}
\bm{r}\cdot\hat{n}_{rot}=r_x(\varphi_1+\varphi_2\cos{(\theta)})+r_y\varphi_2\sin{(\theta)}=0.
\end{equation}
Substituting the expressions of $\bm{a}_1$ and $\bm{a}_2$ into $\bm{\eta}_1,\bm{\eta}_2$ given in the Appendix~\ref{eta_expr}, it can be easily found that $\left(\bm{\eta}_1\times\bm{\eta}_2\right)\cdot\hat{n}_{rot}=0$.

Thus, it completes the proof.
\end{proof}

\begin{figure}[h!]
    \centering
   \includegraphics[width=0.75\linewidth]{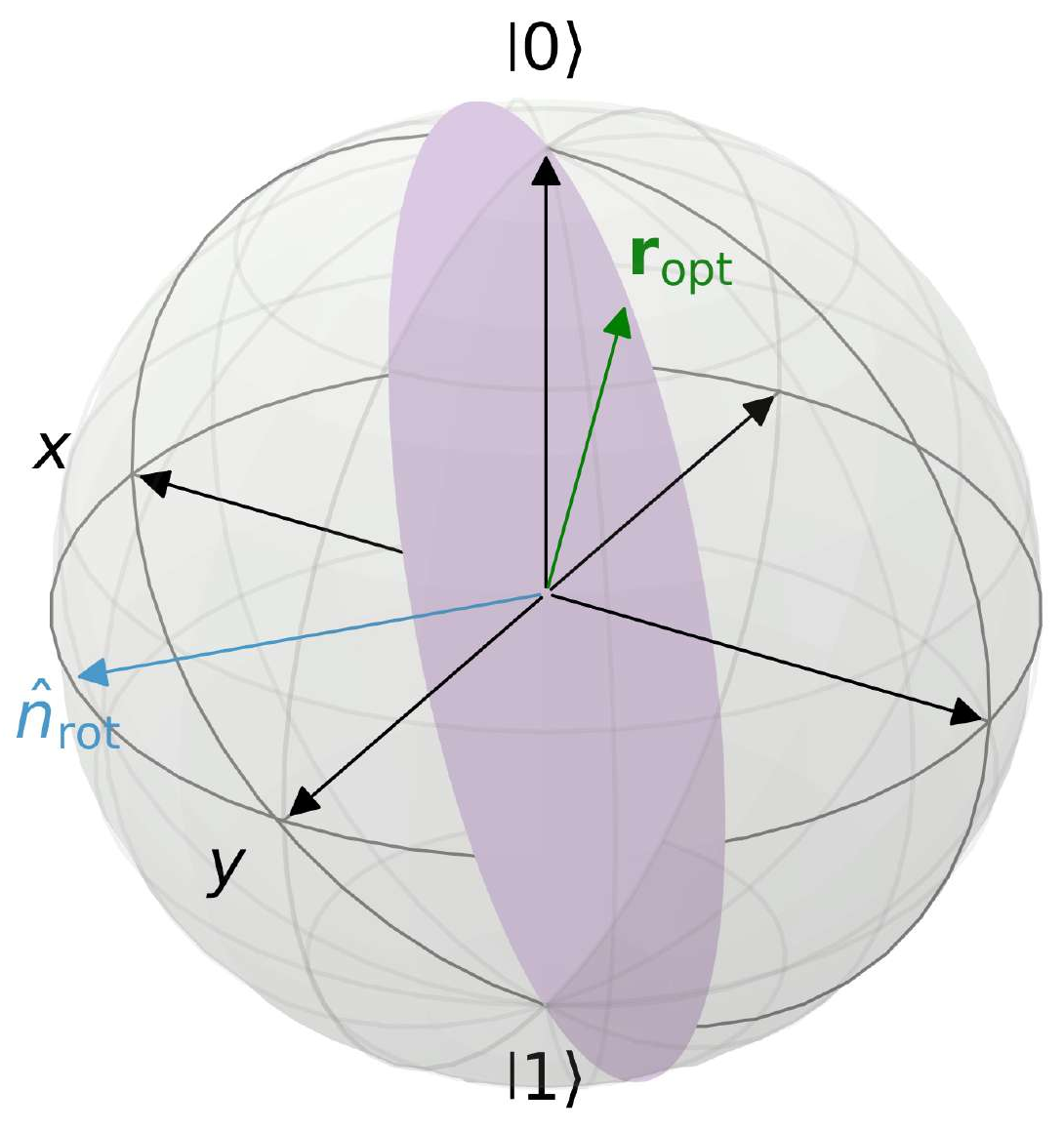}
    \caption{\textbf{Pictorial representation of the optimal single-qubit probe in the simultaneous estimation of two arbitrary unitary encoding phases}. 
    The grey sphere denotes the Bloch sphere of single-qubit states. The black arrows represent the orthogonal directions $x,y,z$ with $\ket{0}$ and $\ket{1}$ representing the two orthonormal eigenstates of $\sigma_z$. Here the blue arrow with the annotation $\hat{n}_{rot}$ represents the unit vector along the axis about which the encoding unitary operation rotates any single-qubit state. 
    The purple disc represents the equatorial plane perpendicular to $\hat{n}{\text{rot}}$ and passing through the center of the sphere. 
    The green arrow, labelled $\bm{r}{\text{opt}}$, denotes the Bloch vector of the optimal probe state that minimizes the HCRB in the simultaneous estimation of two arbitrary unitary encoded phases. Since the optimal state is always pure, regardless of the choice of weight matrix or encoding Hamiltonians, $\bm{r}{\text{opt}}$ touches the surface of the Bloch sphere. Importantly, $\bm{r}_{\text{opt}}$ lies in the purple disc, corresponding to an equal superposition of the eigenstates of the total encoding Hamiltonian.}
    \label{fig:max_coh}
\end{figure}
Note that similar to the single unitary phase estimation problem, the optimal single-qubit probe in the estimation of two $\mathbb{SU}(2)$ phases is an equally weighted superposition of the eigenstates of the total encoding Hamiltonian. However,
unlike the single unitary phase estimation problem using single-qubit probes, here maximum coherence is a necessary condition but not sufficient because the optimal probe in this case has a fixed relative phase in the equal superposition of the eigenstates of the total encoding Hamiltonian.

Now, we consider the encoding axis $\bm{a}_1$ (associated with Hamiltonian $H_1$) to be aligned along the positive $x$-direction, while $\bm{a}_2$ (associated with $H_2$) lies in the $xy$-plane and forms an angle $\theta$ with $\bm{a}_1$. The weight matrix is chosen with elements $\displaystyle w_{11} = w_{22} = 1.0$ and $\displaystyle w_{12} = 0.2$.
We investigate the behavior of the optimal HCRB in the simultaneous estimation of two arbitrary $\mathbb{SU}(2)$-encoded phases as a function of the angle $\theta$ between the encoding axes corresponding to $H_1$ and $H_2$, for different values of the parameters of interest, when the probe is restricted to a two-dimensional Hilbert space. This behavior is illustrated in Fig.~\ref{holevo_plot}.
We observe that when the angle $\theta$ between the encoding axes is close to $0$ or $\pi$, i.e, $H_1$ and $H_2$ are highly commuting, the optimal value of the HCRB is quite high, thereby indicating a poor precision in the simultaneous estimation of our parameters of interest.  The optimal precision gets better as $H_1$ and $H_2$ become more noncommuting, i.e., as $\theta$ approaches $\frac{\pi}{2}$. To highlight the behaviour of the optimal HCRB values in greater detail, we include an inset in Fig.(\ref{holevo_plot}), which shows a zoomed-in view of the curves around their minima. The inset restricts the range of the $y$-axis, allowing for a clearer comparison of the position and nature of the minima across the different curves. The black dotted line in the inset marks the position of $\theta=\pi/2$. We observe that the best precision does not necessarily occur when the two axes $\bm{a}_1$ and $\bm{a}_2$ are orthogonal. Although we depict this here for a particular choice of weight matrix, we have observed similar features for other weight matrices as well.

Moreover, it is interesting to explore estimation scenarios in which quantum metrology involves the estimation of extremely weak or feeble fields. In this context, we now present an approximate yet simplified expression for the Bloch vector of the optimal probe state in our estimation task, assuming the parameters of interest are very small, i.e., $\varphi_1, \varphi_2 \ll 1$.

\subsubsection{Small parameter approximation}
In the regime when the parameters of interest in the two-phase estimation task using single-qubit probes are very small, i.e., $\varphi_1,\varphi_2<<1$, the components of the Bloch vector of optimal probe which exhibits best precision from the perspective of HCRB are approximately given by
\begin{align*}
    r^x_{\text{opt}}\approx \mp \frac{\varphi_2\sin{(\theta)}}{\sqrt{4+\varphi^2_1+\varphi^2_2+2\varphi_1\varphi_2\cos{(\theta)}}}\to 0,\\
    r^y_{\text{opt}}\approx\pm \frac{\varphi_1+\varphi_2\cos{(\theta)}}{\sqrt{4+\varphi^2_1+\varphi^2_2+2\varphi_1\varphi_2\cos{(\theta)}}}\to 0,\\
    r^z_{\text{opt}}\approx\pm \frac{2}{\sqrt{4+\varphi^2_1+\varphi^2_2+2\varphi_1\varphi_2\cos{(\theta)}}}\to \pm 1.
\end{align*}

Since the optimal probe for the simultaneous estimation of two moderately valued $\mathbb{SU}(2)$-encoded phases depends on the vectors $\bm{\eta}_1$ and $\bm{\eta}_2$, which themselves are functions of the parameters of interest, some prior knowledge of the approximate values of parameters of interest is generally required before the estimation process. This necessity also arises because the measurements required to saturate the HCRB typically depend on the true values of parameters of interest. However, note that in the limit where the parameters are very small, this dependence is significantly reduced, and prior knowledge of their approximate values of parameters of interest is no longer needed to determine the optimal probe state in this two-phase estimation scenario.

\begin{figure}[h!]
    \centering
    \includegraphics[width=1.0\linewidth]{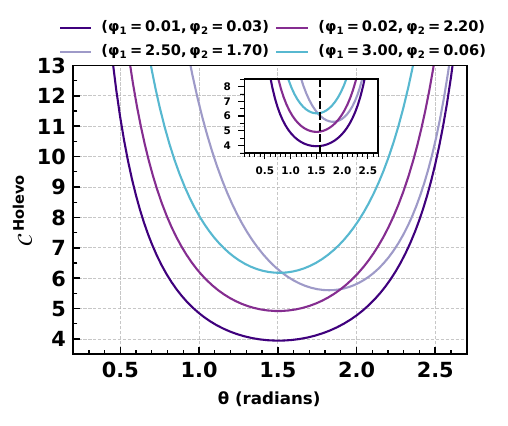}
    \caption{\textbf{How the noncommutativity of $H_1$ and $H_2$ affects the optimal HCRB for single-qubit probes.} We study the behavior of the optimal HCRB in the simultaneous estimation of two arbitrary $\mathbb{SU}(2)$-encoded phases as a function of the angle $\theta$ between the encoding axes corresponding to $H_1$ and $H_2$, for different values of phases, when the probe is restricted to a two-dimensional Hilbert space.  
    We align the encoding axis $\bm{a}_1$ (corresponding to Hamiltonian $H_1$) along the positive $x$-direction, while $\bm{a}_2$ (corresponding to Hamiltonian $H_2$) lies in the $xy$-plane, forming an angle $\theta$ with $\bm{a}_1$. The chosen weight matrix has elements $\displaystyle w_{11} = w_{22} = 1.0$ and $\displaystyle w_{12} = 0.2$.
    When the two encoding axes are almost parallel or anti-parallel to each other, we see that the error in estimation is quite high in all four two-phase estimation tasks. As the noncommutativity of the two encodings increases, the precision in estimation becomes better. The inset shows a zoomed-in image of the minima corresponding to the different curves of optimal HCRB, with the black dotted line denoting the position of $\theta=\pi/2$, i.e., when the encoding axes are orthogonal to each other. The inset shows that orthogonal encodings do not correspond to the best precision in the two-phase simultaneous estimation tasks.
    The vertical axis is dimensionless, while the horizontal axis has a unit of radians.
    }
    \label{holevo_plot}
\end{figure}

\subsection{\texorpdfstring{$\mathbb{SU}(2)$}{SU(2)} encoding single-qutrit probes}
\label{qutrit-strength}
In this subsection, we try to find the optimal probe in the estimation of two $\mathbb{SU}(2)$ phases when probes are restricted to three-dimensional Hilbert space.
Let us consider a two-parameter $\mathbb{SU}(2)$ encoding of the form
\begin{align*}
U(\bm{\varphi})&=\exp{-i\left(\varphi_1 H_1 +\varphi_2 H_2 \right)} \\&\coloneqq \exp{-i\left(\varphi_1\,\bm{a}_1\cdot\bm{J}+\varphi_2\,\bm{a}_2\cdot\bm{J}\right)},
\end{align*}
where $\bm{J}=(J_x,J_y,J_z)^T$ denotes the $\mathbb{SU}(2)$ generators in their spin-$1$ representation and $\bm{a}_1$, $\bm{a}_2$ are any two arbitrary three-dimensional vectors. Our aim is to find the probe that provides the best precision in the simultaneous estimation of $\varphi_1\,,\varphi_2$.
Let's say that we encode the phases $\varphi_1$, $\varphi_2$, which we want to estimate simultaneously, into a pure single-qutrit probe, denoted by $\ket{\psi_0}$. As the previous single-qubit probe case, the effective Hamiltonians $\mathcal{H}_1,\mathcal{H}_2$, corresponding to $H_1$ and $H_2$ respectively, are defined here using $\bm{\eta}_1$ and $\bm{\eta}_2$ respectively. Since the commutation relations of $\mathbb{SU}(2)$ generators are independent of the representation we work with, the expressions for $\bm{\eta}_1$ and $\bm{\eta}_2$ are identical to the ones in the single-qubit case. Hence, the elements of the QFIM in this case are given by
\begin{align}   \left(\mathcal{F}_Q\right)_{ij}=4\text{Re}\left[\expval{\left(\bm{\eta}_i\cdot\bm{J}\right)\left(\bm{\eta}_j\cdot\bm{J}\right)}{\psi_0}\right]-\nonumber \\4\expval{\bm{\eta}_i\cdot\bm{J}}{\psi_0}\expval{\bm{\eta}_j\cdot\bm{J}}{\psi_0},
\end{align}
with $i, j \in \{1,2\}$.
Multiparameter QCRB is a better metrological bound compared to HCRB and moreover, unlike the single-qubit probe case, we can saturate the multiparameter QCRB here using the weak commutation condition~\cite{Candeloro_2024}, since we are not inhibited by the dimension of the probe anymore. Therefore, we will use multiparameter QCRB in this case to find the optimal probe.
We need to minimize $\Tr\left(W\mathcal{F}^{-1}_Q\right)$ under the constraint that the encoded state must satisfy the weak commutativity condition given in Eq.~\eqref{weak}, thereby ensuring that the multiparameter QCRB becomes an achievable bound.
To do this, for the sake of simplicity, we choose $\bm{a}_1=(1,0,0)^T$ and $\bm{a}_2=\left(\cos{(\theta)},\sin{(\theta)},0\right)^T$ with $\theta \in \left[0, \pi\right]$.

\subsubsection{Choosing an ansatz probe}
In order to find the optimal probe in this estimation task, we first consider a special class of probes which satisfy the weak commutation condition written in Eq.~\eqref{weak} (that will show later). Then, in the latter part of the paper, we will prove through re-parametrization of the estimation problem that other probes which satisfy the weak commutation condition cannot provide better precision compared to our chosen probe in this case when we use multiparameter QCRB as our metrological bound. We consider the following choice of a single-qutrit probe:
\begin{equation}
\label{probe-qutrit}
    \ket{\psi_0} = \cos{(\alpha)}\ket{\lambda_{\text{max}}}+e^{i\psi} \sin{(\alpha)}\ket{\lambda_{\text{min}}}
\end{equation}
with $\alpha\in [0,\pi/2]$ and $\psi \in [-\pi,\pi]$, where $\ket{\lambda_{\text{max}}} , \ket{\lambda_{\text{min}}}$ denote the highest and lowest eigenstates of the total encoding Hamiltonian respectively. 
Note that the total encoding Hamiltonian can be written as
\begin{align*}
H(\bm{\varphi})=\sqrt{\varphi^2_1+\varphi^2_2+2\varphi_1\varphi_2\cos{(\theta)}}\bigg[\cos{(\phi)}J_x+\sin{(\phi)}J_y\bigg],
\end{align*}
where 
\begin{align*}
    \cos{(\phi)}=\frac{\varphi_1+\varphi_2\cos{(\theta)}}{\sqrt{\varphi^2_1+\varphi^2_2+2\varphi_1\varphi_2\cos{(\theta)}}},\\
    \sin{(\phi)}=\frac{\varphi_2\sin{(\theta)}}{\sqrt{\varphi^2_1+\varphi^2_2+2\varphi_1\varphi_2\cos{(\theta)}}}.
\end{align*}
The Hamiltonian can then be easily expressed using the Baker-Campbell-Hausdorff formula as
\begin{equation*}
H(\bm{\varphi})=\sqrt{\varphi^2_1+\varphi^2_2+2\varphi_1\varphi_2\cos{(\theta)}}\bigg[e^{-i\phi J_z}J_xe^{i\phi J_z}\bigg].
\end{equation*}
The operator in the square brackets represents a rotated $J_x$ operator. Hence, if $\ket{j;m_x}$ represents an eigenstate of $J_x$ with an eigenvalue $m_x$, then the corresponding eigenstate of the rotated operator would be $e^{-i\phi J_z}\ket{j;m_x}$. Using Wigner's d-matrices, the eigenvectors of $J_x$ can be expressed in the $J_z$ basis as
\begin{align*}
    \ket{j,m_x=j} &= \frac{1}{2^j}\sum_{m_z=-j}^{m_z=j}\binom{2j}{j+m_z}^{1/2}\ket{j,m_z},\\
    \ket{j,m_x=-j} &= \frac{1}{2^j}\sum_{m_z=-j}^{m_z=j}\binom{2j}{j+m_z}^{1/2}\left(-1\right)^{j+m_z}\ket{j,m_z}.
\end{align*}
Hence, the highest and lowest eigenvectors of our total encoding Hamiltonian are given by
\begin{align*}
    \ket{\lambda_{\text{max}}}&=\frac{1}{2^j}\sum_{m_z=-j}^{m_z=j}\binom{2j}{j+m_z}^{1/2}e^{-i\,m_z\phi}\ket{j;m_z},\\
\ket{\lambda_{\text{min}}}&=\frac{1}{2^j}\sum_{m_z=-j}^{m_z=j}\binom{2j}{j+m_z}^{1/2}\left(-1\right)^{j+m_z}e^{-i\,m_z\phi}\ket{j;m_z},
\end{align*}
with $j = 1$.
Our chosen probe can thus be expressed in terms of the highest and lowest eigenvectors of our total encoding Hamiltonian written in Eq.~\eqref{probe-qutrit} as follows:
\begin{align}
\label{probe-expr}
    &\ket{\psi_0}=\frac{1}{2}\sum_{m_z = -1}^{m_z=1}\binom{2}{1+m_z}^{1/2}e^{-im_z\phi}\bigg[\cos{(\alpha)}\nonumber\\&+e^{i\psi}(-1)^{1+m_z}\sin{(\alpha)}\bigg]\ket{1,m_z}.
\end{align} 
The chosen probe can be rewritten as $\displaystyle{\ket{\psi_0}=\sum_{m_z=-1}^{m_z=1}c_{m_z}\ket{1;m_z}}$, with 
\begin{align*}
    c_0 &= \frac{1}{\sqrt{2}}\bigg[\cos{(\alpha)}-e^{i\psi}\sin{(\alpha)}\bigg],\\
    c_{\pm 1}&=\frac{e^{\mp i\phi}}{2}\bigg[\cos{(\alpha)}+e^{i\psi}\sin{(\alpha)}\bigg],
\end{align*}
where $\left(c^*_{-1}c_0+c^*_0c_1\right)=\displaystyle{\frac{e^{-i\phi}}{\sqrt{2}}\cos{(2\alpha)}}$.

To satisfy the weak-commutation criterion by any probe, the following condition must hold true:
\begin{equation*}
\expval{\left(\bm{\eta}_1\times\bm{\eta}_2\right)\cdot\bm{J}}{\psi_0} = 0.
\end{equation*}
The expectation value of any $\mathbb{SU}(2)$ Hamiltonian of the form $\bm{t}\cdot\bm{J}$ with $\bm{t}\coloneqq(\bm{\eta}_1\times\bm{\eta}_2)$ can be expressed as
\begin{align*}
\expval{\bm{t}\cdot\bm{J}}{\psi_0}&=\sqrt{2}\text{Re}\bigg[\left(t_x+i t_y\right)\left(c^*_{-1}c_0+c^*_0c_1\right)\bigg],\\
&=\cos{(2\alpha)}\left(\bm{t}\cdot\hat{n}_{rot}\right),
\end{align*}
where the form of $\hat{n}_{rot}$ is written in Eq.~\eqref{coherent_normal}.
Now, note that the weak commutation condition is always satisfied for any arbitrary values of $\alpha$ and $\psi$. Because we have already proved in the previous subsection that $\left(\bm{\eta}_1\times\bm{\eta}_2\right)\cdot\hat{n}_{rot}=0$, and it will also hold true here since $\bm{\eta}_1$ and $\bm{\eta}_2$ are the same in both cases.
Thus, our chosen class of probes written in Eq.~\eqref{probe-expr}, will satisfy the weak commutation condition.
Hence, our chosen class of probes is a good choice to begin with for finding the optimal probe in this estimation task.

\subsubsection{Re-parametrization}
In order to find the optimal single-qutrit probe which minimizes the multiparameter QCRB in this two-phase $\mathbb{SU}(2)$ encoding estimation task, we can try to re-parameterize our problem so that the expressions of $\bm{\eta}_{1,2}$ are simpler to deal with. Instead of estimating the Hamiltonian strengths $\varphi_1,\varphi_2$, we can alternatively consider estimating the overall magnitude ($B(\varphi_1,\varphi_2)$) of rotation caused by the unitary evolution and the effective angle ($\phi(\varphi_1,\varphi_2)$) that the rotation axis makes with the $x$-direction, i.e., 
\begin{align*}
    B(\varphi_1,\varphi_2) &= \sqrt{\varphi^2_1+\varphi^2_2+2\varphi_1\varphi_2\cos{(\theta)}},\\
    \phi(\varphi_1,\varphi_2)&=\begin{cases}
        \tan^{-1}{\left(\frac{\varphi_2\sin{(\theta)}}{\varphi_1+\varphi_2\cos{(\theta)}}\right)},\varphi_1+\varphi_2\cos{(\theta)} > 0\\
        \pi + \tan^{-1}{\left(\frac{\varphi_2\sin{(\theta)}}{\varphi_1+\varphi_2\cos{(\theta)}}\right)}\\\quad\quad\quad\quad,\varphi_1+\varphi_2\cos{(\theta)} < 0.
    \end{cases}
\end{align*}

In this new parametrization, the corresponding effective Hamiltonians which are used to express the QFIM elements are $\mathcal{H}_{\phi}=\bm{\eta}_{\phi}\cdot\bm{J},\mathcal{H}_B=\bm{\eta}_{B}\cdot\bm{J}$. The expressions of $\bm{\eta}_{\phi},\bm{\eta}_{B}$ are written in Appendix~\ref{qfim_repr}.
Since the weak commutation condition $D=0$ is re-parametrization independent~\cite{Carollo_2019,paris2022}, our chosen probe will also satisfy the weak commutation condition in the re-parameterized estimation problem as well.

\subsubsection{Finding the optimal single-qutrit probe}
We now find the elements of the QFIM in the new parametrization and try to look for conditions on our chosen probe when the precision will be maximized for any chosen weight matrix. Keeping the details of the calculation in the Appendix \ref{qfim_repr}, we only write the elements of the re-parameterized QFIM here as follows:
\begin{align*}
\left(\mathcal{\tilde{F}}_Q\right)_{\phi,\phi}&=8\sin^2{\left(\frac{B}{2}\right)}\bigg[1-\sin{(2\alpha)}\cos{(\psi+B)}\bigg],\\
\left(\mathcal{\tilde{F}}_Q\right)_{B,B}&=4\sin^2{\left(2\alpha\right)}, \quad \left(\mathcal{\tilde{F}}_Q\right)_{\phi,B}=\left(\mathcal{\tilde{F}}_Q\right)_{ B,\phi}=0.
\end{align*}
Note that the QFIM is diagonal in this parametrization for our chosen probe. Furthermore, if we choose $\alpha^{\text{max}}=\frac{\pi}{4}$ and $\psi^{\text{max}}=\pi-B$, the diagonal elements of this re-parameterized QFIM for this probe, denoted by $\mathcal{\tilde{F}}_Q^{\text{max}}$, is given by
\begin{align} 
\label{QFIM-qutrit-1}
\left(\mathcal{\tilde{F}}_Q\right)^{\text{max}}_{\phi,\phi}&=16\sin^2{\left(\frac{B}{2}\right)}=4\left(\Delta^2\mathcal{H}_{\phi}\right)_{\text{max}},\\
\left(\mathcal{\tilde{F}}_Q\right)^{\text{max}}_{B,B}&=4 = 4 \left(\Delta^2 \mathcal{H}_B\right)_{\text{max}} \label{QFIM-qutrit-2}.
\end{align}
These diagonal elements are exactly equal to $4$ times the maximum possible variances of the Hamiltonians $\mathcal{H}_{\phi}$ and $\mathcal{H}_B$ calculated over any arbitrary single-qutrit input state and hence are equal to the maximum QFIs corresponding to $\phi$ and $B$.

Now, let us consider any arbitrary pure single-qutrit state which does not belong to the class of states assumed in Eq.~\eqref{probe-qutrit} and let the re-parameterized QFIM corresponding to this state be denoted as $\tilde{\mathcal{G}}_{Q}$. The QCRB corresponding to $\tilde{\mathcal{G}}_Q$ is given by
\begin{align*}
\Tr{\left(\mathcal{W}\tilde{\mathcal{G}}^{-1}_Q\right)}=\frac{w_{11}\left(\tilde{\mathcal{G}}_Q\right)_{BB}+w_{22}\left(\tilde{\mathcal{G}}_Q\right)_{\phi\phi}-2w_{12}\left(\tilde{\mathcal{G}}_Q\right)_{\phi B}}{\left(\tilde{\mathcal{G}}_Q\right)_{\phi\phi}\left(\tilde{\mathcal{G}}_Q\right)_{BB}-\left(\tilde{\mathcal{G}}_Q\right)^2_{\phi B}},
\end{align*}
for the weight matrix expressed in Eq.~\eqref{weight}.

We now have the following two cases:

\textbf{Case-1:} If the choice of our weight matrix is such that $w_{12}\left(\tilde{\mathcal{G}}_Q\right)_{\phi B}\leq0$, we have the following chain of inequalities:
\begin{align*} &\Tr{\left(\mathcal{W}\tilde{\mathcal{G}}^{-1}_Q\right)}\\&= \frac{w_{11}\left(\tilde{\mathcal{G}}_Q\right)_{BB}+w_{22}\left(\tilde{\mathcal{G}}_Q\right)_{\phi\phi}+\Big|2w_{12}\left(\tilde{\mathcal{G}}_Q\right)_{\phi B}\Big|}{\left(\tilde{\mathcal{G}}_Q\right)_{\phi \phi}\left(\tilde{\mathcal{G}}_Q\right)_{BB}-\left(\tilde{\mathcal{G}}_Q\right)^2_{\phi B}},\\
    &\geq \frac{w_{11}\left(\tilde{\mathcal{G}}_Q\right)_{BB}+w_{22}\left(\tilde{\mathcal{G}}_Q\right)_{\phi\phi}}{\left(\tilde{\mathcal{G}}_Q\right)_{\phi \phi}\left(\tilde{\mathcal{G}}_Q\right)_{BB}},\\
    &\geq \frac{w_{11}}{(\tilde{\mathcal{F}}_Q)^{\text{max}}_{\phi \phi}}+\frac{w_{22}}{(\tilde{\mathcal{F}}_Q)^{\text{max}}_{BB}}
\end{align*}
Thus, when $w_{12}\left(\tilde{\mathcal{G}}_Q\right)_{\phi B}\leq0$, we can clearly see that the single-qutrit probe written in Eq.~\eqref{probe-qutrit} with $\alpha=\frac{\pi}{4}$ and $\psi=\pi-B$ leads to the minimum value of the QCRB.

\textbf{Case-2:}
However, if we have a weight matrix such that $w_{12}\left(\tilde{\mathcal{G}}_Q\right)_{12}>0$, then we are unable to say anything about the optimality of the single-qutrit probe by the above procedure.
For any arbitrary pure single-qutrit probe, $(\tilde{\mathcal{F}}^{\text{max}}_Q - \tilde{\mathcal{G}}_Q)$ has non-negative diagonal elements. Consequently, its trace is non-negative, as $\tilde{\mathcal{F}}^{\text{max}}_Q$ has the maximum possible diagonal values achievable in this estimation scenario.
Any general pure single-qutrit can be defined as
\begin{align*}
\ket{\psi_0}&=\cos(\theta_0)\ket{1;0}+\sin(\theta_0)\cos(\theta_1) e^{i\phi_1}\ket{1;1}\\&+\sin(\theta_0)\sin(\theta_1)e^{i\phi_{-1}}\ket{1;-1}
\end{align*}
with $\theta_0, \theta_1 \in \left[0,\pi\right]$ and $\phi_1, \phi_{-1} \in \left[0,2\pi\right]$.
By varying $\theta_0,\theta_1,\phi_1,\phi_{-1}$ along with the values of $B$ from $0$ to $2\pi$ and $\phi$ from $-\pi$ to $\pi$,
we find numerically that the determinant of $(\displaystyle{\tilde{\mathcal{F}}^{\text{max}}_Q}-\tilde{\mathcal{G}}_Q)$ is non-negative across all values of all these parameters. Thus, we always have $\tilde{\mathcal{F}}^{\text{max}}_Q\succeq\tilde{\mathcal{G}}_Q$ in this estimation scenario and hence $\Tr{\left(\mathcal{W}\left(\tilde{\mathcal{F}}^{\text{max}}_Q\right)^{-1}\right)}\leq\Tr{\left(\mathcal{W}\tilde{\mathcal{G}}^{-1}_Q\right)}$. Thus, even in this case, the probe given by
\begin{align}
\label{best-probe-qutrit}
    \ket{\psi}^{\text{opt}}&=\sin{\left(\frac{B}{2}\right)}\bigg[\frac{e^{i\phi}\ket{1;-1}+e^{-i\phi}\ket{1;1}}{\sqrt{2}}\bigg]\nonumber&&\\&\quad\quad+e^{-i\pi/2}\cos{\left(\frac{B}{2}\right)}\ket{1;0}.
\end{align}
turns out to be the optimal input state for simultaneous estimation of the two $\mathbb{SU}(2)$ parameters $\phi$ and $B$.


Since the QFIM is convex in $\rho_{\bm{\varphi}}$, a mixed probe cannot lead to a lower QCRB value than the best pure probe in its mixture~\cite{Liu_2020}. 
Since the optimal pure single-qutrit probe in this case, i.e, when we use $\alpha=\pi/4$ and $\psi=(\pi-B)$ in Eq.~\eqref{best-probe-qutrit} is independent of the choice of weight matrix, we can say that the single-qutrit probe written in Eq.~\eqref{best-probe-qutrit} will be the optimal probe over all single-qutrit probes for any re-parametrization of the two-phase estimation problem as well.

Whenever $B\to 0$ or $B\to 2\pi$, we have
    $\ket{\psi}^{\text{opt}}\approx \ket{1;0}$,
while when $B\to \pi$,
$\ket{\psi}^{\text{opt}}\approx\frac{e^{i\phi}\ket{1;-1}+e^{-i\phi}\ket{1;1}}{\sqrt{2}}$.

Moreover, from Eqs.~\eqref{QFIM-qutrit-1} and~\eqref{QFIM-qutrit-2}, we can see that the optimal single-qutrit probe written in Eq.~\eqref{best-probe-qutrit} maximizes the determinant of the re-parameterized QFIM while simultaneously maximizing two diagonal entries of QFIM. Note that the determinants of the QFIMs in two different parametrizations only differ via the square of the determinant of the Jacobian, which in turn only depends on the values of the parameters of interest and not the state-parameters $\alpha,\psi$. Therefore, the optimal probe written in Eq.~\eqref{best-probe-qutrit}  will maximize the determinant of QFIM of any two $\mathbb{SU}(2)$ encoded phases in this estimation scenario.

Now, we see how the multiparameter QCRB varies for the optimal single-qutrit probe as the angle between the encoding axes $\theta$ is varied from $0$ to $\pi$ in Fig.~\ref{qcrb_qutrit}. We also use the weight matrix with elements $w_{11}=w_{22}=1.0$ and $w_{12}=0.2$ here. The cost of two-$\mathbb{SU}(2)$ phase estimation rises steeply whenever the encoding Hamiltonians become highly commuting here as well, like the previous single-qubit probe case.

\begin{figure}[h!]
    \centering
    \includegraphics[width=1.0\linewidth]{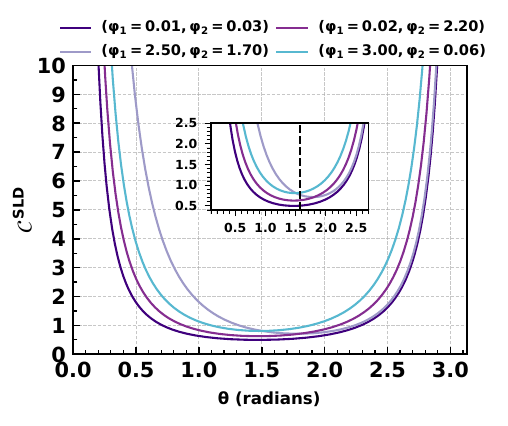}
    \caption{\textbf{Effect of the noncommutativity between $H_1$ and $H_2$ on the optimal HCRB for single-qutrit probes.}
    We study the variation of multiparameter QCRB corresponding to optimal single-qutrit probe in the two $\mathbb{SU}(2)$ phase estimation using single-qutrit probes, with changing $\bm{\theta}$. We plot the optimal/minimal values of the multiparameter QCRB with the angle $\theta$ between the two encoding axes $\bm{a}_1$ and $\bm{a}_2$ which is varied from $0$ to $\pi$. 
    The overall nature of the plot is similar to the one we obtained in the single-qubit case; however, the estimation error here is lower compared to the single-qubit probe estimation scenario. The inset, which shows the zoomed-in picture of the minima corresponding to the different curves, depicts once again that non-orthogonal encodings may outperform orthogonal encodings in certain situations. The vertical axis is dimensionless, while the horizontal axis has a unit of radians.}
    \label{qcrb_qutrit}

\end{figure}

\subsection{Unitary encoding with commuting Hamiltonians and single-qutrit probes}
~\label{commuting-qutrit}
In the subsection, we deal with single-qutrit probes and two-unitary phase estimation but the encoding Hamiltonians, let's say $H_1$ and $H_2$, commute with each other. 
Any normalized pure single-qutrit probe state can be expressed as
\begin{equation*}
\label{qutrit_probe}
    |\psi_0\rangle = \sum_{\mu=0}^2 c_{\mu}|\mu\rangle
\end{equation*}
with $ \sum_{i=0}^2 |c_\mu|^2 = 1$, where $\{|\mu\rangle\}$ is an orthonormal basis and specifically, the basis in which both $H_1$ and $H_2$ are diagonalized.
We consider the eigenspectra of the two Hamiltonians as $\vec{\lambda}\coloneqq \{\lambda_1, \lambda_2, \lambda_3\}$ and $\vec{\lambda}' \coloneqq \{\lambda'_1, \lambda'_2, \lambda'_3\}$, and hence their diagonal representations in this particular basis is given by
\begin{equation}
\label{commuting}
    H_{1}=\begin{pmatrix}
        \lambda_1 & 0 & 0\\
        0 & \lambda_2 & 0\\
        0 & 0 & \lambda_3
    \end{pmatrix}, \quad H_{2}=\begin{pmatrix}
        \lambda_1' & 0 & 0\\
        0 & \lambda_2' & 0\\
        0 & 0 & \lambda_3'
    \end{pmatrix},
\end{equation}
where $\vec{\lambda}, \vec{\lambda'} \in \mathbb{R}^3$.
If our Hamiltonians commute with each other, the weak-commutation condition written in Eq.~\eqref{weak} is automatically satisfied and requires no further constraints to be imposed on probes to find the optimal one. Thus, in this case, we always have $\mathcal{\mathbf{C}}^{\textbf{Holevo}}=\mathcal{\mathbf{C}}^{\textbf{SLD}}$ and we can saturate the QCRB.
We show here that the maximal quantum coherence in the eigenbasis of the total encoding Hamiltonians is not necessary in order to achieve optimal precision through an example. In this example, the encoding Hamiltonians are chosen such that their eigenvalue spectra satisfy the following relations:
\begin{equation*}
 \hspace{3 mm} \lambda_1 = \lambda_3 = \lambda_2 - 1, \quad \lambda’_2 = \lambda’_3 = \lambda’_1 - 1.
\end{equation*}
For this specific case, with a weight matrix of the form in Eq.\eqref{weight}, one can easily find that the multiparameter quantum Cram\'er-Rao bound is expressed as
\begin{equation*}
    \mathcal{\mathbf{C}}^{\textbf{SLD}}=\frac{1}{4}\left[\frac{w_{22}}{|c_0|^2}+\frac{w_{11}}{|c_1|^2}+\frac{w_{11}+w_{22}+2w_{12}}{(1-|c_0|^2-|c_1|^2)}\right]
\end{equation*}
We again choose a weight matrix with elements $w_{11}=w_{22}=1$ and $w_{12}=0.2$ like we did in the previous $\mathbb{SU}(2)$ encoding cases.
From here, we can use the second partial derivative test to find that the minima of the multiparameter QCRB is achieved when
\begin{align*}
    |c_0|&=\sqrt{\frac{\sqrt{5}}{4}\left(\sqrt{5}-\sqrt{3}\right)}\approx 0.5308\\
    |c_1|&=\sqrt{\frac{5}{4}\left(\sqrt{5}-\sqrt{3}\right)}\approx 0.7937\\
    |c_2|&=\sqrt{1-|c_0|^2-|c_1|^2}\approx 0.2970
\end{align*}
which is indeed a completely coherent state, but not maximally coherent in the eigenbasis of encoding Hamiltonians.

The reason that the optimal state is completely coherent is that if we had any one of the coefficients $c_0, c_1$ or $c_2$ equal to zero, we would essentially be working in a two-dimensional subspace, as the commuting Hamiltonians would not intermix their eigenvectors. Since we know that two commuting unitary encodings with a single-qubit probe always lead to a singular QFIM, this would hold true here as well for the single-qutrit case if we had an incompletely coherent probe. 
However, this is in contrast to what we obtained in the single-qubit and single-qutrit cases using $\mathbb{SU}(2)$ encodings. Neither maximal quantum coherence nor the equal superposition of the highest and lowest eigenstates acts as an optimal probe in this case. It suggests that, unlike single-parameter unitary encoding estimation, quantum coherence in the eigenbasis of encoding Hamiltonians is not the real resource in deciding the optimal precision. 



\section{conclusion}
\label{conclude}
Determination of the optimal probe for any quantum metrological task is of utmost importance since 
it provides us the information about the best precision achievable in the estimation task.
In particular, for single-phase estimation problems, it is widely known that the optimal probe is a pure state comprising an equally weighted superposition of the eigenvectors of the encoding Hamiltonian corresponding to its highest and lowest eigenvalues and the relative phase in this superposition can be arbitrary. 
The choice of this probe is motivated by the fact that it leads to the maximum quantum Fisher information associated with the encoded phase. This fact can then be further utilized to prove that if instead entangled states of multiple probes are employed in this single unitary phase estimation task, the optimal precision is achieved by a {GHZ-like} state. Such a state allows the estimation precision to surpass the classical shot-noise scaling and reach the so-called Heisenberg limit, resulting in a quadratic improvement in estimation accuracy. 


However, when multiple parameters of a quantum statistical model are to be simultaneously estimated, the structure of optimal probes remains less explored. In this work, we have identified the optimal probe states for the simultaneous estimation of two arbitrary phases encoded via $\mathbb{SU}(2)$ unitaries for arbitrary weight matrices, using two classes of quantum probes—(i) single-qubit and (ii) single-qutrit.
Previous works in this direction have already pointed out that, for single-qubit probes, estimation of two or more unitarily encoded parameters can never saturate the multiparameter quantum Cram\'er-Rao bound~\cite{paris2022,Carollo_2019,Candeloro_2024}. Hence, in this setting, we have employed the Holevo Cram\'er-Rao bound as the fundamental lower bound for the error in the estimation since it is an achievable bound for pure probes. In contrast, for single-qutrit probes, the QCRB is known to be saturable under suitable conditions, making it the appropriate bound for optimality. For both single-qubit and single-qutrit cases, we have found that the optimal probe is a unique pure state that maintains the structure of an equally weighted superposition of the eigenvectors of the total encoding Hamiltonian corresponding to the maximum and minimum eigenvalues, mirroring the single-phase estimation case.
However, unlike in the single-phase scenario, the relative phase in the superposition is fixed by the estimation task in both the single-qubit and single-qutrit cases.
Moreover, the optimal probe in both cases not only minimizes the HCRB or QCRB, respectively, but also maximizes the determinant of the quantum Fisher information matrix and is also independent of the choice of weight matrix.
Therefore, in both cases, the optimal state will show the best precision for the estimation of any two arbitrary $\mathbb{SU}(2)$ encoded parameters that are re-parametrizations of our two phases.
We also demonstrate that the optimal precision in estimating two $\mathbb{SU}(2)$-encoded phases using both single-qubit and single-qutrit probes is strongly influenced by the degree of commutativity between the two encoding generators: while increased commutativity leads to a significant loss in precision, maximal noncommutativity between the generators does not yield the optimal precision.
Furthermore, when two unitary parameters are encoded using commuting Hamiltonians on single-qutrit probes, the optimal probe is found to be a state that exhibits complete coherence in the eigenbasis of the total encoding Hamiltonians, although maximal coherence in that basis is not necessary to achieve optimal precision.
\section*{Acknowledgements}
P. Ghosh acknowledges support from the ‘INFOSYS scholarship for senior students’ at Harish-Chandra Research Institute, India. A. Ghoshal acknowledges the support from the Alexander von Humboldt Foundation. 

\twocolumngrid
\bibliography{references}
\onecolumngrid
\section*{Appendix}
\appendix

\section{\texorpdfstring{$\mathbb{SU}(2)$}{SU(2)} encoding with single-qubit probes}
In this section of the appendix, we include some of the detailed derivations related to the simultaneous estimation of the two $\mathbb{SU}
(2)$ encoded phases using a single qubit probe state. In the first part of Appendix A, we derive the expressions for the effective Hamiltonians using Wilcox's formula given in Eq.\eqref{wilcox}. These effective Hamiltonians enable us to express the elements of the QFIM and the Uhlmann matrix in a simple manner and later also turn out to be useful when optimizing to find the best probe state. Thereafter, in the second part of this appendix, we provide the expression for the Uhlmann matrix of an arbitrary qubit mixed state, undergoing a unitary parametrization process. We show that such an Uhlmann matrix may be expressed in terms of the corresponding Uhlmann matrix of either of the two eigen-states of the mixed qubit density matrix with additional terms due to the purity of the state. This helps us prove that a a mixed qubit state can never lead to a lower HCRB value compared to the best pure-state in its mixture.\subsection{Derivation of the expressions for the effective Hamiltonians}
\label{eta_expr}
Using Eq.~\eqref{wilcox}, we can evaluate
\begin{equation}
\mathcal{H}_i=\sum_{n=0}^{\infty}\frac{i^{n+2}}{(n+1)!}\mathcal{\zeta}^{(n)}\left[\frac{1}{2}\bm{a}_i\cdot\bm{\sigma}\right];\quad i=\{1,2\}.
\end{equation}
where $\mathcal{\zeta}^{(n)}[X]$ denotes the stacked commutator as
\begin{equation*}
  \mathcal{\zeta}^{(n)}[X]\equiv \underbrace{\bigg[\frac{1}{2}\varphi_1\bm{a}_1\cdot{\bm{\sigma}}\\+\frac{1}{2}\varphi_2\bm{a}_2\cdot{\bm{\sigma}},\cdots,\bigg[\frac{1}{2}\varphi_1\bm{a}_1\cdot{\bm{\sigma}}\\+\frac{1}{2}\varphi_2\bm{a}_2\cdot{\bm{\sigma}},X\bigg]\cdots\bigg]}_{n\text{ commutators}}
\end{equation*}

$\mathcal{\zeta}^{(n)}\left[\frac{1}{2}\bm{a}_i\cdot\bm{\sigma}\right]$ for each $n$ can be simplified as
\begin{equation}
\zeta^{(n)}\bigg[\frac{1}{2}\bm{a}_1 \cdot \bm{\sigma}\bigg] =
\begin{cases}
        \frac{1}{2}\bm{a}_1 \cdot \bm{\sigma}, & \text{if } n = 0, \\[8pt]
        \,\varphi_2 \left(\varphi_1 f_2 + \varphi_2 f_1\right)^{\frac{n}{2}-1} (f_1 \bm{a}_1 - f_2 \bm{a}_2)\cdot\frac{\bm{\sigma}}{2}, & \text{if } n \text{ is even}, \\[8pt]
        -i\varphi_2 (\varphi_1 f_2 + \varphi_2 f_1)^{\frac{(n-1)}{2}} (\bm{a}_1 \times \bm{a}_2)\cdot\frac{\bm{\sigma}}{2}, & \text{if } n \text{ is odd}.
    \end{cases}
\end{equation}
\vspace{1pt}
Using the above expressions, after some simple algebra, we can find that the sum of all the even $n$ terms apart from $n=0$ 
can be given as
\begin{align*}
&\frac{\varphi_2}{\left(\varphi_1 f_2+\varphi_2 f_1\right)^{3/2}}\sum_{m=1}^{\infty}(-1)^{m+1} \frac{\left(\sqrt{\varphi_1f_2+\varphi_2 f_1}\right)^{2m+1}}{(2m+1)!}\left(f_1 \bm{a}_1-f_2\bm{a}_2\right)\cdot\frac{\bm{\sigma}}{2}, \\&= \frac{\varphi_2}{\left(\varphi_1 f_2+\varphi_2 f_1\right)^{3/2}}\left[\sqrt{\varphi_1f_2+\varphi_2f_1}-\sin{\left(\sqrt{\varphi_1f_2+\varphi_2 f_1}\right)}\right]\left(f_1\bm{a}_1-f_2\bm{a}_2\right)\cdot\frac{\bm{\sigma}}{2},
\end{align*}
where $f_1(\bm{\varphi}) \coloneqq \varphi_1\cos{(\theta)}+\varphi_2$, $f_2(\bm{\varphi}) \coloneqq \varphi_1+\varphi_2\cos{(\theta)}$.
The sum of all the odd $n$ terms is given by
\begin{align*}
    &\frac{\varphi_2}{\left(\varphi_1f_2+\varphi_2f_1\right)}\sum_{m=0}^{\infty} (-1)^{m+1}\frac{\left(\sqrt{\varphi_1f_2+\varphi_2f_1}\right)^{2m+2}}{(2m+2)!}\,\left(\bm{a}_1\times\bm{a}_2\right)\cdot\frac{\bm{\sigma}}{2} =-\frac{2\varphi_2}{\varphi_1f_2+\varphi_2f_1}\sin^2{\left(\frac{\sqrt{\varphi_1f_2+\varphi_2 f_1}}{2}\right)}\left(\bm{a}_1\times\bm{a}_2\right)\cdot\frac{\bm{\sigma}}{2}
\end{align*}
Using this, we can write $\mathcal{H}_1$ as
\begin{equation}
 \mathcal{H}_1=\frac{1}{2}\bm{\eta}_1\cdot\bm{\sigma},
\end{equation}
where
\label{eta expression}
\begin{multline}\label{eta1}
    \bm{\eta}_1=-\bm{a}_1+\frac{\varphi_2}{\left(\varphi_1 f_2+\varphi_2 f_1\right)^{3/2}}\left[\sqrt{\varphi_1f_2+\varphi_2f_1}-\sin{\left(\sqrt{\varphi_1f_2+\varphi_2 f_1}\right)}\right]\left(f_1\bm{a}_1-f_2\bm{a}_2\right)\\-\frac{2\varphi_2}{\varphi_1f_2+\varphi_2f_1}\sin^2{\left(\frac{\sqrt{\varphi_1f_2+\varphi_2 f_1}}{2}\right)}\left(\bm{a}_1\times\bm{a}_2\right).
\end{multline}
Similarly, $\mathcal{H}_2$  can be simplified, which is $\mathcal{H}_2=\frac{1}{2}\bm{\eta}_2\cdot\bm{\sigma},$ where
\begin{multline}
\bm{\eta}_2=-\bm{a}_2-\frac{\varphi_1}{\left(\varphi_1 f_2+\varphi_2 f_1\right)^{3/2}}\left[\sqrt{\varphi_1f_2+\varphi_2f_1}-\sin{\left(\sqrt{\varphi_1f_2+\varphi_2 f_1}\right)}\right]\left(f_1\bm{a}_1-f_2\bm{a}_2\right)\\+\frac{2\varphi_1}{\varphi_1f_2+\varphi_2f_1}\sin^2{\left(\frac{\sqrt{\varphi_1f_2+\varphi_2 f_1}}{2}\right)}\left(\bm{a}_1\times\bm{a}_2\right). 
\end{multline}

Using the expressions for $\bm{\eta}_{1,2}$, the inverse of the QFIM can be easily found, and hence the expression for $\mathcal{Q}$ is simple to obtain. 

\subsection{Uhlmann matrix for mixed state with unitary parametrization}
\label{purity-uhlmann}
Let us consider an arbitrary mixed $d$-dimensional quantum state which is used as a probe. Its spectral decomposition is given by
\begin{align}
    \rho=\sum_{\mu=0}^{s-1}\lambda_{\mu}\ketbra{\psi_{\mu}}{\psi_{\mu}},
\end{align}
where $s<d$ denotes the dimension of the support space of $\rho$.
A unitary parametrization operation $U(\bm{\varphi})$ encodes the parameters $\bm{\varphi}$ on to this state, such that the encoded state is given by
\begin{align}
    \rho_{\bm{\varphi}}=U(\bm{\varphi})\rho U^{\dagger}(\bm{\varphi})=\sum_{\mu=0}^{s-1}\lambda_{\mu}\ketbra{\psi_{\mu}(\bm{\varphi})}{\psi_{\mu}(\bm{\varphi})}.
\end{align}
Corresponding to the $i$-th encoded parameter $\varphi_i$, the matrix elements of the SLD operator $L_{\varphi_i}$ in the eigen-basis of the encoded state is well known \cite{Liu_2014} as
\begin{align*}
\left(L_{\varphi_i}\right)_{\mu\nu}=\begin{cases}
\frac{2\left(\partial_{\varphi_i}\rho(\bm{\varphi})\right)_{\mu\nu}}{\lambda_{\mu}(\bm{\varphi})+\lambda_{\nu}(\bm{\varphi})}; \quad \mu,\nu\in\text{Support}(\rho_{\bm{\varphi}})&\\
\text{indeterminate};\quad\text{otherwise},
\end{cases}   
\end{align*}
where the indeterminate elements corresponding to the case when both $\mu$ and $\nu$ lie outside the support space of $\rho(\bm{\varphi})$ do not contribute to either the QFIM or the Uhlmann matrix. Since the eigenvalues of the encoded state do not depend on the encoded parameters for a unitary encoding, these matrix elements can be written as
\begin{align*}
    \left(L_{\varphi_i}\right)_{\mu\nu}=2\left(\frac{\lambda_{\mu}-\lambda_{\nu}}{\lambda_{\mu}+\lambda_{\nu}}\right)\left(\frac{\partial\bra{\psi_{\mu}(\bm{\varphi})}}{\partial \varphi_i}\right)\ket{\psi_{\nu}(\bm{\varphi})}
\end{align*}
We can now easily see that a term like $\Tr(\rho_{\bm{\varphi}}L_{\varphi_i}L_{\varphi_j})$ can be expressed as follows:
\begin{align}
\Tr(\rho_{\bm{\varphi}}L_{\varphi_i}L_{\varphi_j})=\sum_{\mu=0}^{s-1}\sum_{\nu=0}^{d-1}\lambda_{\mu}\left(L_{\varphi_i}\right)_{\mu\nu}\left(L_{\varphi_j}\right)_{\nu\mu}.
\end{align}
Using this, the $ij$-th element of the Uhlmann matrix can be written as
\begin{align}
    D_{ij}=\frac{1}{2i}\Tr(\rho_{\bm{\varphi}}\left[L_{\varphi_i},L_{\varphi_j}\right])=4\sum_{\mu=0}^{s-1}\sum_{\nu=0}^{d-1}\lambda_{\mu}\left(\frac{\lambda_{\mu}-\lambda_{\nu}}{\lambda_{\mu}+\lambda_{\nu}}\right)^2\text{Im}\left[\left(\frac{\partial\bra{\psi_{\mu}(\bm{\varphi})}}{\partial \varphi_i}\right)\ket{\psi_{\nu}(\bm{\varphi})}\bra{\psi_{\nu}(\bm{\varphi})}\left(\frac{\partial\ket{\psi_{\mu}(\bm{\varphi})}}{\partial \varphi_j}\right)\right].
\end{align}
Similar to the steps followed in Ref.~\cite{Liu_2014}, we can break the summation over $\nu$ into two sums, one from $0$ to $s-1$ and the other from $s$ to $d-1$ and then using the completeness of the projectors in the spectral decomposition of $\rho_{\bm{\varphi}}$ one can arrive the following forms of elements of the Uhlmann matrix:
\begin{align*}
    D_{ij}=4\sum_{\mu=0}^{s-1}\lambda_{\mu}\text{Im}\left[\left(\frac{\partial\bra{\psi_{\mu}(\bm{\varphi})}}{\partial \varphi_i}\right)\left(\frac{\partial\ket{\psi_{\mu}(\bm{\varphi})}}{\partial \varphi_j}\right)\right]-16\sum_{\mu=0}^{s-1}\sum_{\nu=0}^{s-1}\frac{\lambda^2_{\mu}\lambda_{\nu}}{\left(\lambda_{\mu}+\lambda_{\nu}\right)^2}\text{Im}\left[\left(\frac{\partial\bra{\psi_{\mu}(\bm{\varphi})}}{\partial \varphi_i}\right)\ket{\psi_{\nu}(\bm{\varphi})}\bra{\psi_{\nu}(\bm{\varphi})}\left(\frac{\partial\ket{\psi_{\mu}(\bm{\varphi})}}{\partial \varphi_j}\right)\right].
\end{align*}
which for unitary parametrization processes can be written using the effective Hamiltonians as
\begin{align}
    D_{ij}=4\sum_{\mu=0}^{s-1}\lambda_{\mu}\text{Im}\left[\matrixel{\psi_{\mu}}{\mathcal{H}_i\mathcal{H}_j}{\psi_{\mu}}\right]-16\sum_{\mu=0}^{s-1}\sum_{\nu=0}^{s-1}\frac{\lambda^2_{\mu}\lambda_{\nu}}{\left(\lambda_{\mu}+\lambda_{\nu}\right)^2}\text{Im}\left[\matrixel{\psi_{\mu}}{\mathcal{H}_i}{\psi_{\nu}}\matrixel{\psi_{\nu}}{\mathcal{H}_j}{\psi_{\mu}}\right].
\end{align}
Now, specifically for a qubit mixed state, we have $s=2$, hence we have 
\begin{align}
D_{ij}&=4\lambda_0\text{Im}\left[\matrixel{\psi_0}{\mathcal{H}_i\mathcal{H}_j}{\psi_{0}}\right]+4\lambda_1\text{Im}\left[\matrixel{\psi_1}{\mathcal{H}_i\mathcal{H}_j}{\psi_{1}}\right]-16\lambda^2_0\lambda_1\text{Im}\left[\matrixel{\psi_{0}}{\mathcal{H}_i}{\psi_{1}}\matrixel{\psi_{1}}{\mathcal{H}_j}{\psi_{0}}\right]\nonumber,\\&-16\lambda^2_1\lambda_0\text{Im}\left[\matrixel{\psi_{1}}{\mathcal{H}_i}{\psi_{0}}\matrixel{\psi_{0}}{\mathcal{H}_j}{\psi_{1}}\right],
\end{align}
where the $\mu=\nu$ terms from the last summation do not appear here because they do not have imaginary parts. Now using $\mathbb{I}_2=\ketbra{\psi_0}{\psi_0}+\ketbra{\psi_1}{\psi_1}$ the above expression can be greatly simplified since
\begin{align*}
&\text{Im}\left[\matrixel{\psi_0}{\mathcal{H}_i\mathcal{H}_j}{\psi_0}\right]=\text{Im}\left[\matrixel{\psi_0}{\mathcal{H}_i\ket{\psi_1}\bra{\psi_1}\mathcal{H}_j}{\psi_0}\right],\\
&\text{Im}\left[\matrixel{\psi_1}{\mathcal{H}_i\mathcal{H}_j}{\psi_1}\right]=\text{Im}\left[\matrixel{\psi_1}{\mathcal{H}_i\ketbra{\psi_0}{\psi_0}\mathcal{H}_j}{\psi_1}\right]=-\text{Im}\left[\matrixel{\psi_0}{\mathcal{H}_i\mathcal{H}_j}{\psi_0}\right].
\end{align*}
\begin{align}
\implies D_{ij}&=4\left(\lambda_0-\lambda_1\right)\text{Im}\left[\matrixel{\psi_0}{\mathcal{H}_i\mathcal{H}_j}{\psi_{0}}\right]-16\lambda_0\lambda_1\left(\lambda_0-\lambda_1\right)\text{Im}\left[\matrixel{\psi_0}{\mathcal{H}_i \mathcal{H}_j}{\psi_{0}}\right]\nonumber,\\
&=4\left(\lambda_0-\lambda_1\right)\text{Im}\left[\matrixel{\psi_0}{\mathcal{H}_i \mathcal{H}_j}{\psi_{0}}\right]\left(1-4\,\det(\rho)\right).
\end{align}
For a single-qubit quantum state, we know that $\det(\rho)=\left(1-\Tr(\rho^2)\right)/2$. Also, $\left(\lambda_0-\lambda_1\right)^2=\Tr(\rho^2)-2\,\det(\rho)$. Therefore, we can finally write
\begin{equation}
    D_{ij}=\pm 4 \left(2\Tr(\rho^2)-1\right)^{3/2} \text{Im}\left[\matrixel{\psi_0}{\mathcal{H}_i \mathcal{H}_j}{\psi_{0}}\right]=\pm \left(2\Tr(\rho^2)-1\right)^{3/2} D_{ij}(\ketbra{\psi_0(\bm{\varphi})}{\psi_0(\bm{\varphi})}),
\end{equation}
where the $+$ sign is used when $\lambda_0>\lambda_1$ and the $-$ sign is used when $\lambda_1>\lambda_0$.

\section{\texorpdfstring{$\mathbb{SU}(2)$}{SU(2)} encoding with single-qutrit probes}
In this Appendix, we provide the detailed derivations related to the simultaneous estimation of the two $\mathbb{SU}(2)$ phases using a single-qutrit probe state. These results shall complement the discussions in the main text. In the first part of this appendix, we discuss the weak-commutativity of a specific type of qutrit probe. This enables us to validate the effectiveness of such a qutrit probe for our two phase estimation problem. We also provide a useful expression for the expectation value of any $\mathbb{SU}(2)$ operator calculated using this probe. In the second part of the appendix we provide the derivation of the quantum Fisher information matrix elements in the re-parameterized estimation problem and prove its diagonal structure.
\subsection{Weak-Commutation for single-qutrit probes}
\label{qutrit_weak}
Here we prove that the type of qutrit probe we consider prior to optimization, always satisfies the weak-commutation criterion. To start with, consider any operator of the form $H=\bm{t}\cdot\bm{J}$. First we consider the action of this operator on any arbitrary $J_z$ eigenstate as
\begin{align*}   \bm{t}\cdot\bm{J}\ket{1;\mu}&=\bigg[\frac{t_x}{2}\left(J_{+}+J_{-}\right)-i\frac{t_y}{2}\left(J_{+}-J_{-}\right)+t_zJ_z\bigg]\ket{1;\mu}\\
    &=\frac{t_x-it_y}{2}\sqrt{\left(1-\mu\right)\left(2+\mu\right)}\ket{1,\mu+1}+\frac{t_x+it_y}{2}\sqrt{\left(1+\mu\right)\left(2-\mu\right)}\ket{1;\mu-1}+\mu t_z\ket{1;\mu}
\end{align*}
If our probe state is expressed as $\ket{\psi}_0=\sum_{\mu=-1}^{1}c_{\mu}\ket{1;\mu}$, we get
\begin{align*}
    &\expval{\bm{t}\cdot\bm{J}}{\psi_0}=\sum_{\mu}\sum_{\nu}c^*_{\nu}c_{\mu}\matrixel{1;\nu}{\bm{t}\cdot\bm{j}}{1;\mu}\\
    &=\frac{t_x-it_y}{2}\sum_{\mu}\sum_{\nu}c^*_{\nu}c_{\mu}\sqrt{\left(1-\mu\right)\left(2+\mu\right)}\delta_{\nu,\mu+1}+\frac{t_x+it_y}{2}\sum_{\mu}\sum_{\nu}c^*_{\nu}c_{\mu}\sqrt{\left(1+\mu\right)\left(2-\mu\right)}\delta_{\nu,\mu-1}+t_z\sum_{\mu}\sum_{\nu}\mu c^*_{\nu}c_{\mu}\delta_{\mu,\nu}\\
    &=\frac{t_x-it_y}{2}\sum_{\mu}c^*_{\mu+1}c_{\mu}\sqrt{\left(    1-\mu\right)\left(2+\mu\right)}+\frac{t_x+it_y}{2}\sum_{\mu}c^*_{\mu}c_{\mu+1}\sqrt{\left(1-\mu\right)\left(2+\mu\right)}+t_z\sum_{\mu}\mu|c_{\mu}|^2\\
    &=t_z\sum_{\mu}\mu |c_{\mu}|^2+\text{Re}\bigg[\left(t_x+it_y\right)\sum_{\mu}c^*_{\mu}c_{\mu+1}\sqrt{\left(1-\mu\right)\left(2+\mu\right)}\bigg]
\end{align*}
As we can see from Eq.\eqref{probe-expr}, we have $|c_{1}|=|c_{-1}|$, which implies that the first term in the equation above must vanish for this probe and hence
\begin{align*}
    &\expval{\bm{t}\cdot\bm{J}}{\psi_0}=\text{Re}\bigg[\left(t_x+it_y\right)\sum_{\mu}c^*_{\mu}c_{\mu+1}\sqrt{\left(1-\mu\right)\left(2+\mu\right)}\bigg]
=\sqrt{2}\text{Re}\bigg[\left(t_x+it_y\right)\lbrace c^*_{-1}c_0+c^*_{0}c_1\rbrace\bigg]
\end{align*}
We have 
\begin{align*}
    &c^*_{-1}c_0=\frac{e^{-i\phi}}{2\sqrt{2}}\bigg[\cos{(\alpha)}+e^{-i\psi}\sin{(\alpha)}\bigg]\bigg[\cos{(\alpha)}-e^{i\psi}\sin{(\alpha)}\bigg],c^*_{0}c_1=\frac{e^{-i\phi}}{2\sqrt{2}}\bigg[\cos{(\alpha)}-e^{-i\psi}\sin{(\alpha)}\bigg]\bigg[\cos{(\alpha)}+e^{i\psi}\sin{(\alpha)}\bigg]\\
    &\implies c^*_{-1}c_0+c^*_{0}c_1=\frac{e^{-i\phi}}{2\sqrt{2}}\bigg[\cos^2{(\alpha)}-i\sin{(2\alpha)}\sin{(\psi)}-\sin^2{(\alpha)}+\cos^2{(\alpha)}+i\sin{(2\alpha)}\sin{(\psi)}-\sin^2{(\alpha)}\bigg]\\
    &\implies c^*_{-1}c_0+c^*_{0}c_1=\frac{e^{-i\phi}}{\sqrt{2}}\cos{(2\alpha)}
    \end{align*}
    \begin{equation}
    \therefore \expval{\bm{t}\cdot\bm{J}}{\psi_0}=\cos{(2\alpha)}\,\text{Re}\bigg[e^{-i\phi}\left(t_x+it_y\right)\bigg]=\cos{(2\alpha)}\bigg[t_x\cos{\left(\phi\right)}+t_y\sin{\left(\phi\right)}\bigg]=\cos{(2\alpha)}\left(\bm{t}\cdot\hat{n}_{rot}\right)    
    \end{equation}
    Specifically, when we have $\bm{t}=\bm{\eta}_1\times\bm{\eta}_2$, we have
    \begin{equation*}
        \expval{\left(\bm{\eta}_1\times\bm{\eta}_2\right)\cdot\bm{J}}{\psi_0}=\cos{(2\alpha)}\bigg[\left(\bm{\eta}_1\times\bm{\eta}_2\right)\cdot\hat{n}_{rot}\bigg]=0
    \end{equation*}
    where the last equality holds because of the fact that the $\bm{\eta}_{1,2}$ vectors do not change on changing the dimension of our system and in case of qubits we already showed that their cross-product is indeed orthogonal to $\hat{n}_{rot}$.
    \subsection{Quantum Fisher information matrix elements for re-parameterized problem}\label{qfim_repr}
    In this section of the appendix, we derive the expressions for the QFIM elements using a probe of the form given in Eq.\eqref{probe-qutrit} for the re-parameterized problem. As before, the QFIM elements for such a unitary-encoding process is given as
    \begin{equation*}
        \left(\tilde{F}_Q\right)_{ij}=4\bigg[\text{Re}\left(\expval{\tilde{\mathcal{H}}_i\tilde{\mathcal{H}}_j}{\psi_0}\right)-\expval{\tilde{\mathcal{H}}_i}{\psi_0}\expval{\tilde{\mathcal{H}}_j}{\psi_0}\bigg]
    \end{equation*}
    where once again $\tilde{\mathcal{H}}_i=\tilde{\bm{\eta}}_i\cdot\bm{J}$.
    \\
    First we take a look at how the product of two effective Hamiltonians looks like in the case of spin-1 representation.
    \begin{align*}
        &\left(\tilde{\bm{\eta}}_i\cdot\bm{J}\right)\left(\tilde{\bm{\eta}}_j\cdot\bm{J}\right)=\left(\tilde{\eta}_i\right)_x\left(\tilde{\eta}_j\right)_x J^2_x+\left(\tilde{\eta}_i\right)_y\left(\tilde{\eta}_j\right)_y J^2_y+\left(\tilde{\eta}_i\right)_z\left(\tilde{\eta}_j\right)_z J^2_z+\left(\tilde{\eta}_i\right)_x\left(\tilde{\eta}_j\right)_y J_xJ_y+\left(\tilde{\eta}_i\right)_y\left(\tilde{\eta}_j\right)_x J_yJ_x\\
        &+\left(\tilde{\eta}_i\right)_y\left(\tilde{\eta}_j\right)_z J_yJ_z+\left(\tilde{\eta}_i\right)_z\left(\tilde{\eta}_j\right)_y J_zJ_y+\left(\tilde{\eta}_i\right)_x\left(\tilde{\eta}_j\right)_z J_xJ_z+\left(\tilde{\eta}_i\right)_z\left(\tilde{\eta}_j\right)_x J_zJ_x\\&=\bigg[\frac{\left(\tilde{\eta}_i\right)_x\left(\tilde{\eta}_j\right)_x}{4}\left(J_++J_-\right)^2-\frac{\left(\tilde{\eta}_i\right)_y\left(\tilde{\eta}_j\right)_y}{4}\left(J_+-J_-\right)^2+\left(\tilde{\eta}_i\right)_z\left(\tilde{\eta}_j\right)_zJ^2_z\bigg]\\&-\frac{i}{4}\bigg[\left(\tilde{\eta}_i\right)_x\left(\tilde{\eta}_j\right)_y\left(J_++J_-\right)\left(J_+-J_-\right)+
        \left(\tilde{\eta}_i\right)_y\left(\tilde{\eta}_j\right)_x\left(J_+-J_-\right)\left(J_++J_-\right)
        \bigg]\\&-\frac{i}{2}\bigg[\left(\tilde{\eta}_i\right)_y\left(\tilde{\eta}_j\right)_z\left(J_+-J_-\right)J_z+\left(\tilde{\eta}_i\right)_z\left(\tilde{\eta}_j\right)_y J_z\left(J_+-J_-\right)\bigg]+\frac{1}{2}\bigg[\left(\tilde{\eta}_i\right)_x\left(\tilde{\eta}_j\right)_z\left(J_++J_-\right)J_z+\left(\tilde{\eta}_i\right)_z\left(\tilde{\eta}_j\right)_xJ_z\left(J_++J_-\right)\bigg]
    \end{align*}
    We need to take expectation value of each of the terms above with respect to $\ket{\psi_0}$ and then find their real parts. 
    \begin{flalign*}
        &\text{\textbullet}\expval{\left(J_++J_-\right)^2}{\psi_0}=\sum_{\nu}\sum_{\mu}c^*_{\nu}c_{\mu}\matrixel{1;\nu}{\left(J_+\right)^2+\left(J_-\right)^2+\lbrace{J_+,J_-\rbrace}}{1;\mu}&&\\
        &=\sum_{\mu}\sum_{\nu}c^*_{\nu}c_{\mu}\matrixel{1;\nu}{\left(J_+\right)^2+\left(J_-\right)^2+2\left(2-\mu^2\right)}{1;\mu}\\
        &=\sum_{\mu}c^*_{\mu+2}c_{\mu}\sqrt{\left(1-\mu\right)\left(2+\mu\right)}\sqrt{\left(-\mu\right)\left(3+\mu\right)}+\sum_{\mu}c^*_{\mu-2}c_{\mu}\sqrt{\left(1+\mu\right)\left(2-\mu\right)}\sqrt{\left(\mu\right)\left(3-\mu\right)}-2\sum_{\mu}|c_{\mu}|^2\mu^2+4\\
        &=2\bigg[c^*_{1}c_{-1}+c^*_{-1}c_1\bigg]-2\bigg(|c_1|^2+|c_{-1}|^2\bigg)+4\\
        &=4-2|c_{1}-c_{-1}|^2=4-2\sin^2{(\phi)}\bigg[1+\sin{(2\alpha)}\cos{(\psi)}\bigg].
    \end{flalign*}
    Proceeding similarly, one finds
    \begin{flalign*}
        &\text{\textbullet}\matrixel{\psi_0}{\left(J_+-J_-\right)^2}{\psi_0}=2|c_1+c_{-1}|^2-4=2\cos^2{(\phi)}\bigg[1+\sin{(2\alpha)\cos{(\psi)}}\bigg]-4.&&
    \end{flalign*}
    \begin{flalign*}
        &\text{\textbullet}\matrixel{\psi_0}{J^2_z}{\psi_0}=|c_1|^2+|c_{-1}|^2=1-|c_0|^2=1-\frac{1}{2}\bigg[1-\sin{(2\alpha)}\cos{(\psi)}\bigg].&&
    \end{flalign*}
    \begin{flalign*}
        &\text{\textbullet}\matrixel{\psi_0}{\left(J_++J_-\right)\left(J_+-J_-\right)}{\psi_0}=\matrixel{\psi_0}{\left(J_+\right)^2-\left(J_-\right)^2-[J_+,J_-]}{\psi_0}=\matrixel{\psi_0}{\left(J_+\right)^2-\left(J_-\right)^2-2J_z}{\psi_0}&&\\
        &=2\left(c^*_{1}c_{-1}-c^*_{-1}c_1\right)-2\sum_{\mu}|c_{\mu}|^2\mu\quad\text{ (note that the last summation term is zero for our chosen probe)}\\
        &=2\left(c^*_{1}c_{-1}-c^*_{-1}c_1\right)=i\sin{(2\phi)}\bigg[1+\sin{(2\alpha)}\cos{(\psi)}\bigg].
    \end{flalign*}
    \begin{flalign*}
        &\text{\textbullet}\matrixel{\psi_0}{\left(J_+-J_-\right)\left(J_++J_-\right)}{\psi_0}=-\bigg(\matrixel{\psi_0}{\left(J_++J_-\right)\left(J_+-J_-\right)}{\psi_0}\bigg)^*=i\sin{(2\phi)}\bigg[1+\sin{(2\alpha)}\cos{(\psi)}\bigg].&&
    \end{flalign*}
    \begin{flalign*}
    &\text{\textbullet}\matrixel{\psi_0}{J_+J_z}{\psi_0}=-\sqrt{2}c^*_0c_{-1}\,;\,\matrixel{\psi_0}{J_-J_z}{\psi_0}=\sqrt{2}c^*_0c_1&&
    \end{flalign*}
    \begin{flalign*}
    &\text{\textbullet}\matrixel{\psi_0}{\left(J_+-J_-\right)J_z}{\psi_0}=-\sqrt{2}c^*_0\left(c_1+c_{-1}\right)\quad=-\cos{(\phi)}\bigg[\cos{(2\alpha)}+i\sin{(2\alpha)}\sin{(\psi)}\bigg].&&
    \end{flalign*}
    \begin{flalign*}
    &\text{\textbullet}\matrixel{\psi_0}{\left(J_++J_-\right)J_z}{\psi_0}=\sqrt{2}c^*_0\left(c_1-c_{-1}\right)=-i\sin{(\phi)}\bigg[\cos{(2\alpha)}+i\sin{(2\alpha)}\sin{(\psi)}\bigg].&&
    \end{flalign*}
    \begin{flalign*}
        &\text{\textbullet}\matrixel{\psi_0}{J_z\left(J_+-J_-\right)}{\psi_0}=-\bigg(\matrixel{\psi_0}{\left(J_+-J_-\right)J_z}={\psi_0}\bigg)^*=\sqrt{2}c_0\left(c^*_1+c^*_{-1}\right)=\cos{(\phi)}\bigg[\cos{(2\alpha)}-i\sin{(2\alpha)}\sin{(\psi)}\bigg].&&
    \end{flalign*}
    \begin{flalign*}
        &\text{\textbullet}\matrixel{\psi_0}{J_z\left(J_++J_-\right)}{\psi_0}=\bigg(\matrixel{\psi_0}{\left(J_++J_-\right)J_z}{\psi_0}\bigg)^*=\sqrt{2}c_0\left(c^*_1-c^*_{-1}\right)=i\sin{(\phi)}\bigg[\cos{(2\alpha)}-i\sin{(2\alpha)}\sin{(\psi)}\bigg].&&
    \end{flalign*}
    Now using the expressions derived above, we can find
    \begin{flalign*}
    &\text{Re}\bigg[\matrixel{\psi_0}{\left(\tilde{\bm{\eta}}_i\cdot\bm{J}\right)\left(\tilde{\bm{\eta}_j}\cdot\bm{J}\right)}{\psi_0}\bigg]=\tilde{\bm{\eta}}_i\cdot\tilde{\bm{\eta}}_j  -\frac{1}{2}\bigg[1+\sin{(2\alpha)}\cos{(\psi)}\bigg]\bigg[\left(\tilde{\eta}_i\right)_x\left(\tilde{\eta}_j\right)_x\sin^2{(\phi)}+\left(\tilde{\eta}_i\right)_y\left(\tilde{\eta}_j\right)_y\cos^2{(\phi)}\bigg]&&\\
    &-\frac{\left(\tilde{\eta}_i\right)_z\left(\tilde{\eta}_j\right)_z}{2}\bigg[1-\sin{(2\alpha)}\cos{(\psi)}\bigg]+\frac{\sin{(2\phi)}}{4}\bigg[1+\sin{(2\alpha)}\cos{(\psi)}\bigg]\bigg[\left(\tilde{\eta}_i\right)_x\left(\tilde{\eta}_j\right)_y+\left(\tilde{\eta}_i\right)_y\left(\tilde{\eta}_j\right)_x\bigg]\\&-\frac{\sin{(2\alpha)}\sin{(\psi)}\cos{(\phi)}}{2}\bigg[\left(\tilde{\eta}_i\right)_y\left(\tilde{\eta}_j\right)_z+\left(\tilde{\eta}_i\right)_z\left(\tilde{\eta}_j\right)_y\bigg]+\frac{\sin{(2\alpha)}\sin{(\psi)}\sin{(\phi)}}{2}\bigg[\left(\tilde{\eta}_i\right)_x\left(\tilde{\eta}_j\right)_z+\left(\tilde{\eta}_i\right)_z\left(\tilde{\eta}_j\right)_x\bigg].
    \end{flalign*}
    Thus, the QFIM elements would be given by
    \begin{flalign*}
        \left(\tilde{\mathcal{F}}_Q\right)_{ij}=4\bigg\lbrace{\text{Re}\bigg[\matrixel{\psi_0}{\left(\tilde{\bm{\eta}}_i\cdot\bm{J}\right)\left(\tilde{\bm{\eta}}_j\cdot\bm{J}\right)}{\psi_0}\bigg]-\cos^2{(2\alpha)}\left(\tilde{\bm{\eta}}_i\cdot\hat{n}_{rot}\right)(\tilde{\bm{\eta}}_j\cdot\hat{n}_{rot})\bigg\rbrace}.
    \end{flalign*}
    The expressions for $\tilde{\bm{\eta}}_{i,j}$ can be derived in exactly the same manner as that in the qubit-case using Wilcox's formula and they are given by
    \begin{align}
&\tilde{\bm{\eta}}_{\phi}=2\sin{\left(\frac{B}{2}\right)}\left(\cos{\left(\frac{B}{2}\right)\sin{(\phi)}},-\cos{\left(\frac{B}{2}\right)\cos{(\phi)}},\sin{\left(\frac{B}{2}\right)}\right)^T,&&\\
        &\tilde{\bm{\eta}}_{B}=-\left(\cos{(\phi)},\sin{(\phi)},0\right)^T.
\end{align}

\subsubsection{\texorpdfstring{Calculation of $\left(\mathcal{\tilde{F}}_Q\right)_{\phi,\phi}$}{Calculation of (Ftilde_Q)_(phi,phi)}}
In order to calculate $\left(\mathcal{\tilde{F}}_Q\right)_{\phi,\phi}$, which is the first diagonal element of the re-parameterized Quantum Fisher Information Matrix, we would need the following quantities: 
    \begin{align*}
\text{\textbullet}\left(\tilde{\eta}_{\phi}\right)_x^2&=\sin^2{(B)}\sin^2{(\phi)}&\text{\textbullet}2\left(\tilde{\eta}_{\phi}\right)_x\left(\tilde{\eta}_{\phi}\right)_y&=-\sin^2{(B)}\sin{(2\phi)}&&\\
\text{\textbullet}\left(\tilde{\eta}_{\phi}\right)^2_y&=\sin^2{(B)}\cos^2{(\phi)}&\text{\textbullet}2\left(\tilde{\eta}_{\phi}\right)_y\left(\tilde{\eta}_{\phi}\right)_z&=-4\sin{(B)}\cos{(\phi)}\sin^2{\left(\frac{B}{2}\right)}&&\\ \text{\textbullet}\left(\tilde{\eta}_{\phi}\right)^2_z&=4\sin^4{\left(\frac{B}{2}\right)}&\text{\textbullet}2\left(\tilde{\eta}_{\phi}\right)_x\left(\tilde{\eta}_{\phi}\right)_z&=4\sin{(B)}\sin{(\phi)}\sin^2{\left(\frac{B}{2}\right)}&&
    \end{align*}
    Additionally, we also have $|\tilde{\bm{\eta}}_{\phi}|^2=4\sin^2{\left(\frac{B}{2}\right)}$ and $\tilde{\bm{\eta}}_{\phi}\cdot\hat{n}_{rot}=0$ as we can see from the expressions of $\tilde{\bm{\eta}}_{\phi,B}$ above. Using these expressions, we can now evaluate the first diagonal element as
\begin{flalign*}  
&\implies \left(\tilde{F}_Q\right)_{\phi,\phi}\\&=4\bigg[4\sin^2{\left(\frac{B}{2}\right)}-\frac{\sin^2{\left(B\right)}}{2}\bigg(\sin^4{(\phi)}+\cos^4{(\phi)}\bigg)\bigg(1+\sin{(2\alpha)}\cos{(\psi)}\bigg)-2\sin^4{\left(\frac{B}{2}\right)}\bigg(1-\sin{(2\alpha)}\cos{(\psi)}\bigg),&&\\
&-\frac{\sin^2{(B)}\sin^2{(2\phi)}}{4}\bigg(1+\sin{(2\alpha)}\cos{(\psi)}\bigg)+2\sin{(B)}\sin^2{\left(\frac{B}{2}\right)}\sin{(\psi)}\sin{(2\alpha)}\bigg],&&\\
&=4\bigg[4\sin^2{\left(\frac{B}{2}\right)}-\frac{\sin^2{\left(B\right)}}{2}\bigg(1-\frac{\sin^2{(2\phi)}}{2}\bigg)\bigg(1+\sin{(2\alpha)}\cos{(\psi)}\bigg)-2\sin^4{\left(\frac{B}{2}\right)}\bigg(1-\sin{(2\alpha)}\cos{(\psi)}\bigg),&&\\
&-\frac{\sin^2{(B)}\sin^2{(2\phi)}}{4}\bigg(1+\sin{(2\alpha)}\cos{(\psi)}\bigg)+2\sin{(B)}\sin^2{\left(\frac{B}{2}\right)}\sin{(\psi)}\sin{(2\alpha)}\bigg]&&\\
&=4\bigg[4\sin^2{\left(\frac{B}{2}\right)}-2\sin^2{\left(\frac{B}{2}\right)}\cos^2{\left(\frac{B}{2}\right)}-\frac{\sin^2{(B)}\sin{(2\alpha)}\cos{(\psi)}}{2}-2\sin^4{\left(\frac{B}{2}\right)}+2\sin^4{\left(\frac{B}{2}\right)}\sin{(2\alpha)}\cos{(\psi)}&&\\&+2\sin{(B)}\sin^2{\left(\frac{B}{2}\right)}\sin{(\psi)}\sin{(2\alpha)}\bigg],&&\\
&=4\bigg[2\sin^2{\left(\frac{B}{2}\right)}-2\sin^2{\left(\frac{B}{2}\right)}\sin{(2\alpha)}\cos{(\psi)}\cos{(B)}+2\sin{(B)}\sin^2{\left(\frac{B}{2}\right)}\sin{(\psi)}\sin{(2\alpha)}\bigg],&&\\
&=8\sin^2{\left(\frac{B}{2}\right)}\bigg[1-\sin{(2\alpha)}\cos{(\psi+B)}\bigg].
\end{flalign*}

\subsubsection{\texorpdfstring{Calculation of $\left(\mathcal{\tilde{F}}_Q\right)_{B,B}$}{Calculation of (Ftilde_Q)_(B,B)}} 
Now we move on to calculate the second diagonal element of the re-parameterized QFIM, i.e., $\left(\tilde{\mathcal{F}}_Q\right)_{B,B}$. The quantities we would need are

\begin{flalign*}
\text{\textbullet}\left(\tilde{\eta}_B\right)^2_x&=\cos^2{(\phi)}&\text{\textbullet}\left(\tilde{\eta}_B\right)^2_y&=\sin^2{(\phi)}&\text{\textbullet}\left(\tilde{\eta}_B\right)_x\left(\tilde{\eta}_B\right)_y&=\sin{(\phi)}\cos{(\phi)}&&
\end{flalign*}
Additionally, $|\tilde{\bm{\eta}}_B|^2=1$ and $\tilde{\bm{\eta}}_B\cdot\hat{n}_{rot}=-1$
\begin{align*}
\implies \left(\tilde{\mathcal{F}}_Q\right)_{B,B}&=4\bigg[1-\bigg(1+\sin{(2\alpha)}\cos{(\psi)}\bigg)\sin^2{(\phi)}\cos^2{(\phi)}+\frac{1}{2}\sin{(2\phi)}\bigg(1+\sin{(2\alpha)}\cos{(\psi)}\bigg)\sin{(\phi)}\cos{(\phi)}-\cos^2{(2\alpha)}\bigg],\\
&=4\sin^2{(2\alpha)}.
\end{align*}

\subsubsection{\texorpdfstring{Calculation of $\left(\mathcal{\tilde{F}}_Q\right)_{\phi,B}$}{Calculation of (Ftilde_Q)_(phi,B)}}
Finally, we calculate the off-diagonal element of the QFIM. The required terms are listed below
\begin{align*} \text{\textbullet}\left(\tilde{\eta}_{\phi}\right)_x\left(\tilde{\eta}_{B}\right)_x&=-\sin{(B)}\sin{(\phi)}\cos{(\phi)}\hspace{2cm}\text{\textbullet}\left(\tilde{\eta}_{\phi}\right)_x\left(\tilde{\eta}_{B}\right)_y+\left(\tilde{\eta}_{\phi}\right)_y\left(\tilde{\eta}_{B}\right)_x=\sin{(B)}\cos{(2\phi)}&&\\
        \text{\textbullet}\left(\tilde{\eta}_{\phi}\right)_y\left(\tilde{\eta}_{B}\right)_y&=\sin{(B)}\sin{(\phi)}\cos{(\phi)}\hspace{2.35cm}\text{\textbullet}\left(\tilde{\eta}_{\phi}\right)_z\left(\tilde{\eta}_{B}\right)_y=-2\sin^2{\left(\frac{B}{2}\right)}\sin{(\phi)}\\
        \text{\textbullet}\left(\tilde{\eta}_{\phi}\right)_z\left(\tilde{\eta}_{B}\right)_x&=-2\sin^2{\left(\frac{B}{2}\right)}\cos{(\phi)}\hspace{2.35cm}\text{\textbullet}\,\,\tilde{\bm{\eta}}_{\phi}\cdot\tilde{\bm{\eta}}_B=0
    \end{align*}
    
    \begin{flalign*}    \implies\left(\tilde{F}_Q\right)_{\phi,B}&=4\bigg[0-\frac{1}{2}\bigg(1+\sin{(2\alpha)}\cos{(\psi)}\bigg)\bigg(-\sin{(B)}\sin^3{(\phi)}\cos{(\phi)}+\sin{(B)}\cos^3{(\phi)}\sin{(\phi)}\bigg)&&\\&+\frac{\sin{(2\phi)}\cos{(2\phi)}}{4}\sin{(B)}\bigg(1+\sin{(2\alpha)}\cos{(\psi)}\bigg)+\sin{(2\alpha)}\sin{(\psi)}\cos{(\phi)}\sin^2{\left(\frac{B}{2}\right)}\sin{(\phi)}&&\\&-\sin{(2\alpha)}\sin{(\psi)}\sin{(\phi)}\sin^2{\left(\frac{B}{2}\right)}\cos{(\phi)}\bigg],&&\\  &=0.
    \end{flalign*}

\end{document}